\documentclass[peerreview,draftclsnofoot,12pt,onecolumn]{IEEEtran}
\usepackage{graphicx}
\usepackage{amsmath}
\usepackage{epsfig}
\usepackage{cite}
\usepackage{caption}
\usepackage{subfigure}
\usepackage{amssymb}
\ifCLASSINFOpdf
\else
\fi
\hyphenation{op-tical net-works semi-conduc-tor}

\begin{document}

\title{Duality of Channel Encoding and Decoding - Part II: Rate-1 Non-binary Convolutional Codes}

\author{Qimin You,
        Yonghui Li,~\IEEEmembership{Senior Member,~IEEE,}
        Soung Chang Liew,~\IEEEmembership{Fellow,~IEEE,}
        and Branka Vucetic,~\IEEEmembership{Fellow,~IEEE}
        
\thanks{Yonghui Li, Qimin You and Branka Vucetic are with School of Electrical and Information Engineering,
        University of Sydney, Sydney, NSW, 2006, Australia. Email: \{yonghui.li,~qimin.you,~branka.vucetic\}@sydney.edu.au}
\thanks{Soung C. Liew is with Department of Information Engineering, The Chinese University of Hong Kong, Hong Kong, Email:soung@ie.cuhk.edu.hk.}}
\maketitle

\begin{abstract}
This is the second part of a series of papers on a revisit to the bidirectional Bahl-Cocke-Jelinek-Raviv (BCJR) soft-in-soft-out (SISO) maximum a posteriori probability (MAP) decoding algorithm. Part I revisited the BCJR MAP decoding algorithm for rate-1 binary convolutional codes and proposed a linear complexity decoder using shift registers in the complex number field. Part II proposes a low complexity decoder for rate-1 non-binary convolutional codes  that achieves the same error performance as the bidirectional BCJR SISO MAP decoding algorithm. We observe an explicit relationship between the encoding and decoding of rate-1 convolutional codes in $GF(q)$. Based on this relationship, the BCJR forward and backward decoding are implemented by dual encoders using shift registers whose contents are vectors of complex numbers. The input to the dual encoders is the probability mass function (pmf) of the received symbols and the output of the dual encoders is the pmf of the information symbols. The bidirectional BCJR MAP decoding is implemented by linearly combining the shift register contents of the dual encoders for forward and backward decoding. The proposed decoder significantly reduces the computational complexity of the bidirectional BCJR MAP algorithm from exponential to linear with constraint length of convolutional codes. To further reduce complexity, fast Fourier transform (FFT) is applied. Mathematical proofs and simulation results are provided to validate our proposed decoder.
\end{abstract}

\begin{IEEEkeywords}
BCJR MAP decoding, computational complexity, convolutional codes, dual encoder.
\end{IEEEkeywords}
\newpage
\section{introduction}
In part I of this series of papers \cite{YQSB13,YRV12}, we revisited the bidirectional Bahl-Cocke-Jelinek-Raviv (BCJR) soft-in-soft-out (SISO) maximum a posteriori probability (MAP) decoding process of rate-1 binary convolutional codes. We observed an explicit relationship between the encoding and decoding of rate-1 binary convolutional codes and proposed a low complexity decoder using shift registers in the complex number field. The input to the decoder is the logarithm of soft symbol estimates of the coded symbols obtained from the received signals, and the output is the logarithm of the soft symbol estimates of the information symbols. The proposed decoder reduced the computational complexity of SISO MAP forward and backward recursion from exponential to linear without any performance loss.

The last few years have witnessed a drastic increase in the demand for reliable communications, constrained by the scarce available bandwidth, to support high-speed data transmission applications, such as voice, video, email and web browsing. To accommodate such demand, non-binary convolutional codes have been proposed to replace binary convolutional codes in many applications \cite{VY00, BG96, BM96}. For example, non-binary turbo codes, which employ non-binary convolutional codes as component codes, achieve lower error-floor and better performance at the waterfall region compared to binary turbo codes \cite{DF09}. Additionally, non-binary convolutional codes suit situations where bandwidth-efficient higher order (non-binary) modulation schemes are used, as well as situations where non-coherent modulation schemes are used, such as frequency-shift keying \cite{AT04}.

The main obstacle that impedes the practical implementation of the non-binary convolutional codes is the high decoding complexity. The decoding of non-binary convolutional codes is not equivalent to the decoding of binary convolutional codes, because the non-binary decoder operates on the symbol level, instead of bit level. Decoding is essentially finding an optimal path in a trellis based graph. Therefore, the Viterbi algorithm (VA) \cite{VAJ67} can be applied to decode non-binary convolutional codes. It provides the maximum-likelihood estimate of the information sequence based on the exhaustive search of the trellis over a fixed length window. Unfortunately, in standard VA, hard decision outputs are produced instead of soft outputs containing a posterior probability (APP) of the transmitted symbols. Therefore, the standard VA can not be used to decode concatenated codes, like turbo codes. To overcome this problem, the modified soft-output VA (SOVA) was proposed in \cite{LWY99} to decode non-binary convolutional codes. SOVA not only delivers the maximum-likelihood information sequence but also provides the APPs of the transmitted symbols. Therefore, it can be applied to decode concatenated codes. However, according to \cite{LC83}, the computational complexity of VA and SOVA is proportional to the number of states, which is $q^{K}$, where $q$ is the field size and $K$ is the constraint length \cite{WS87}. Thus, it grows rapidly for non-binary alphabets, which makes practical implementation tremendously difficult. Furthermore, both VA and SOVA suffer from considerable performance loss compared to the BCJR MAP algorithm \cite{BCJR74} which achieves optimal symbol error probability.

The BCJR MAP decoding algorithm is a bidirectional decoding algorithm which includes a forward and backward decoding recursion. The APP of the information symbol is estimated based on the combined forward and backward recursions. All the intermediate results during forward and backward recursions have to be stored before a decision is made, which incurs large memory storage requirements. Furthermore, the computational complexity at each time unit in both recursions is proportional to $q^{2K}$. The high computational complexity results in large decoding delay and unacceptably high costs.

Therefore, a low complexity decoder with good error performance is desirable for the pragmatic implementation of non-binary convolutional codes. In the decoding of turbo codes based on memory-1 convolutional codes in \cite{LSPC11}, the authors found that the encoder memory at the current time slot is a linear combination of the encoder memory at the previous time slot and the current input. Since the encoder memory at the previous time slot and the current input are independent, the probability mass function (pmf) of the current encoder memory can be calculated by the convolution of the pmf of encoder memory at the previous time slot and the pmf of the current input. This reduced the calculation complexity in the forward and backward recursions at each time slot to $q^2$. To further reduce complexity, fast Fourier transform (FFT) is employed on the pmf involved in the convolutions \cite{LSPC11}. The decoding complexity is thus reduced to $q\log_2 q$ at each time slot. However, this calculation simplification only works for memory-1 convolutional codes. The generalization to non-binary convolutional codes with arbitrary memory length is not considered in \cite{LSPC11}. Furthermore, forward and backward recursions still have to be performed based on the trellis of the non-binary convolutional codes. All the intermediate results have to be stored and thus large memory requirements are incurred.

In this paper, we propose a low complexity decoder for general rate-1 non-binary convolutional codes that achieves exactly the same error performance as the bidirectional BCJR MAP decoding algorithm. We observe an explicit relationship between the BCJR MAP forward/backward decoder of a convolutional code and its encoder. Based on this observation, we propose the dual encoders for SISO forward and backward decoding, which are simply implemented using shift registers whose contents are pmfs of complex vectors. Then, the bidirectional SISO MAP decoding is achieved by linearly combining the shift register contents in the forward and backward dual encoders. This significantly reduces the original exponential computational complexity of BCJR MAP forward and backward recursion to $(q^2)K$. To further reduce the computational complexity, FFT \cite{AAS96, MA07} is applied and its complexity is reduced to $(q\log_2 q)K$. Mathematical proofs and simulation results are provided to validate our proposed decoder.

The rest of this paper is organized as follows. In Section \ref{section forward dual encoder}, we propose an dual encoder for SISO MAP forward decoding of non-binary convolutional codes. The dual encoder for SISO MAP backward decoding is presented in Section \ref{section backward dual encoder}. The bidirectional BCJR MAP decoding is achieved by linearly combining the shift register contents of the forward and backward dual encoders, and simulation results are provided to validate our proposed decoder in Section \ref{shift register contents combining}. In Section \ref{conclusion}, concluding remarks are drawn. Mathematical proofs are given in the appendices.

\section{Dual encoder of SISO MAP forward decoding}\label{section forward dual encoder}
In this section, we focus on the SISO MAP forward decoding algorithm. We consider convolutional codes over the finite fields with $q$ elements, denoted by $GF(q)$. We focus on the decoding of a single constituent rate-1 convolutional code in $GF(q)$, generated by $g(x)=\frac{a(x)}{f(x)}=\frac{1+a_1x+\cdots+a_{n-1}x^{n-1}+x^{n}}{1+f_1x+\cdots+f_{n-1}x^{n-1}+x^{n}}$, where $a_i, f_i \in GF(q), i=1, 2, \cdots, n-1$. Its encoder C is shown in Fig. \ref{Encoder of a rate-1 non-binary convolutional code}, where all the additions and multiplications are performed in $GF(q)$. In this paper, the input, output and memory of shift registers for convolutional encoders are in $GF(q)$. Let $ \overrightarrow{b}$$=\left(b_1, b_2, \cdots, b_L\right)$ and $\overrightarrow{c}$$=\left(c_1, c_2, \cdots, c_L\right)$ denote the information symbol sequence and the codeword sequence, where $L$ is the frame length. The code sequence is modulated and transmitted through the additive white Gaussian noise (AWGN) channel. The receiver obtains $\overrightarrow{ y_k}=\left(y_{1}, y_{2}, \cdots, y_{L}\right)$ at the output of the channel.

Let $p_{ c_k}(\omega)=p\left({c_k}=\omega|\overrightarrow{y_k}\right), \omega \in GF(q)$ denote the conditional probability of the code symbol ${ c_k}$ given $\overrightarrow{y_k}$. Let us further define the following probability mass function of ${c_k}$ and ${ b_k}$
\begin{align}
&{\bf P_{\bf c_k}}=[ p_{c_k}(0), p_{c_k}(1),  \cdots, p_{c_k}(q-1)]\\
&{\bf P_{\bf b_k}}=[ p_{b_k}(0), p_{b_k}(1),  \cdots, p_{b_k}(q-1)].
\end{align}

\vspace{5mm}
\begin{figure}[!ht]
   \centering
  \includegraphics[width=4.4in,height=2in]{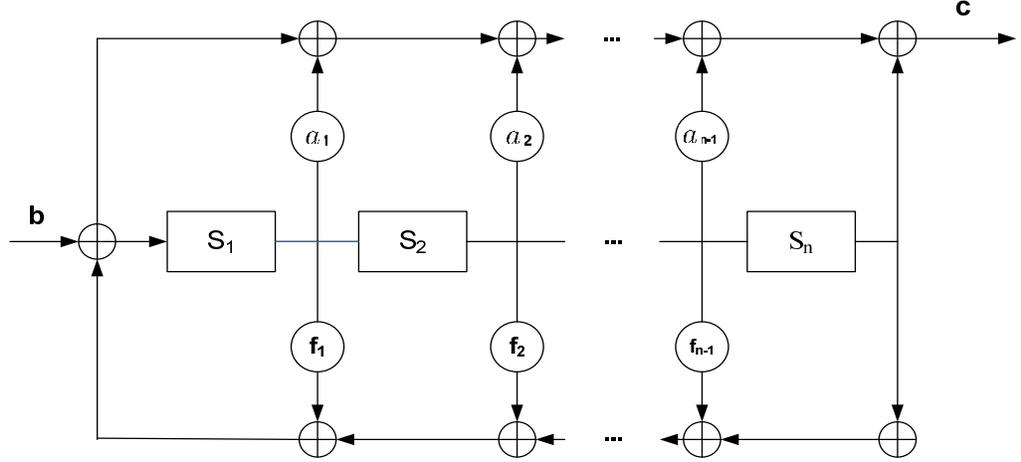}
  \caption{Encoder C of a rate-1 convolutional code in $GF(q)$, generated by $g(x)=\frac{1+a_1x+\cdots+a_{n-1}x^{n-1}+x^{n}}{1+f_1x+\cdots+f_{n-1}x^{n-1}+x^{n}}$, where all the additions and multiplications are performed in $GF(q)$.}\label{Encoder of a rate-1 non-binary convolutional code}
\end{figure}

\begin{figure}[!ht]
   \centering
  \includegraphics[width=2.0in,height=0.7in]{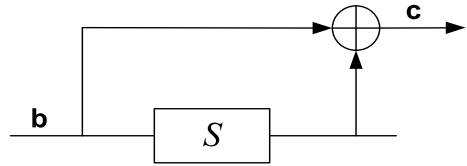}
  \caption{A convolutional code with generator polynomial $g(x)=1+x$ in $GF(4)$.}\label{example encoder C}
\end{figure}

\vspace{5mm}
\begin{figure}[!ht]
 \centering
  \includegraphics[width=2.0in,height=0.8in]{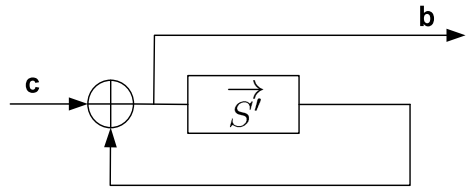}
  \caption{An encoder $\bar{C}$ with generator polynomial $q(x)=\frac{1}{1+x}$ in $GF(4)$.}\label{example decoder}
\end{figure}

The aim of the decoder is to derive ${\bf P_{\bf b_k}}$ based on the pmf of the code symbols. To facilitate the exposition, we first consider a simple example. Let us consider a convolutional code with generator polynomial $g(x)=1+x$ in $GF(4)$. Its encoder is shown in Fig. \ref{example encoder C}. We define an encoder $\bar{C}$ in $GF(4)$, described by $q(x)=\frac{1}{g(x)}=\frac{1}{1+x}$ (Fig. \ref{example decoder}). If the input to the encoder $\bar{C}$ is a codeword $\overrightarrow{ c}$, generated by $g(x)$, the output of the encoder $\bar{C}$ is the decoded information sequence $\overrightarrow{ b}$. Let $\overrightarrow{S'}(k)$ denote the memory of the shift register of encoder $\bar{C}$ at time $k$, the encoder output $b_k$ is given by
\begin{align}
b_k&=c_k+c_{k-1}\nonumber \\
&=c_k+\overrightarrow{S'}(k-1),
\end{align}
where $\overrightarrow{S'}(k)=c_k$. Therefore, we have the following relationship
\begin{align}\label{pbk}
p_{\bf b_k}(\omega)=p\left\{b_k=\omega\right\}=p\left\{c_k+\overrightarrow{S'}(k-1)=\omega\right\}.
\end{align}
Note that $c_k$ and $\overrightarrow{S'}(k-1)$ are independent, and equation (\ref{pbk}) can be written as
\begin{align}
p_{\bf b_k}(\omega)=\sum_{c_k=0}^{q-1}p\left\{c_k\right\}p\left\{\overrightarrow{S'}(k-1)=\omega-c_k\right\}.
\end{align}
According to the properties of random variables, since $b_k$ is the summation of $c_k$ and $\overrightarrow{S'}(k-1)$ in $GF(4)$, the pmf of $ b_k$ is the convolution of the pmf of $ c_k$ and $\overrightarrow{S'}(k-1)$. Let ${\bf P_{\overrightarrow{S'}(k-1)}}=[ p_{\overrightarrow{S'}(k-1)}(0), p_{\overrightarrow{S'}(k-1)}(1), \cdots, p_{\overrightarrow{S'}(k-1)}(q-1)]$ denote the pmf of $\overrightarrow{S'}(k-1)$, then the pmf of $b_k$ can be calculated as
\begin{align}
{\bf P_{\bf b_k}}={\bf P_{\bf c_k}} \ast {\bf P_{\bf \overrightarrow{S'}(k-1)}}.
\end{align}
where $\ast$ denotes the convolution operation. 

Its corresponding dual encoder is shown in Fig. \ref{example inverse decoder}. As verified mathematically in Appendix \ref{proof of theorem 1}, this dual encoder achieves exactly the same BER as the bidirectional BCJR MAP decoding algorithm.

\begin{figure}[!ht]
 \centering
  \includegraphics[width=2.0in,height=0.8in]{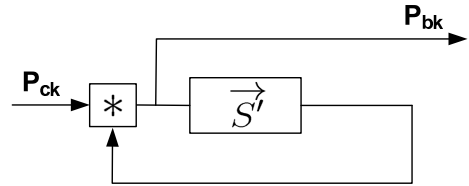}
  \caption{The dual encoder of the SISO forward decoding for the code $g(x)=1+x$ in $GF(4)$, where $*$ denotes the convolution operation.}\label{example inverse decoder}
\end{figure}

This dual encoder could be generalized to any rate-1 convolutional codes in $GF(q)$. First, we define an encoder $\bar{C}$ in $GF(q)$, described by $q(x)=\frac{1}{g(x)}=\frac{f(x)}{a(x)}=\frac{1+f_1x+\cdots+f_{n-1}x^{n-1}+x^{n}}{1+a_1x+\cdots+a_{n-1}x^{n-1}+x^{n}}$, shown in Fig. \ref{example encoder}. If the input to the encoder $\bar{C}$ is a codeword $\overrightarrow{c}$, generated by $g(x)$, the output of the encoder $\bar{C}$ is the decoded information sequence $\overrightarrow{b}$. In this encoding process, at each time instant, each encoder memory can be described as a linear combination of input symbols over $GF(q)$, denoted by $\overrightarrow{S}'_i(k)=\sum_{p=1}^k \eta_p c_p$, where $\eta_p \in GF(q)$. If the linear combination equations of two memories contain one or more common input symbols, we say that the two memories are correlated. For example, if $\overrightarrow{S}'_1(3)=c_1+c_3$, and $\overrightarrow{S}'_3(3)=c_1$, then these two memories are correlated, as both linear combination equations of these two memories contain the input symbol $c_1$.

\vspace{5mm}
\begin{figure}[!ht]
   \centering
  \includegraphics[width=4.4in,height=2in]{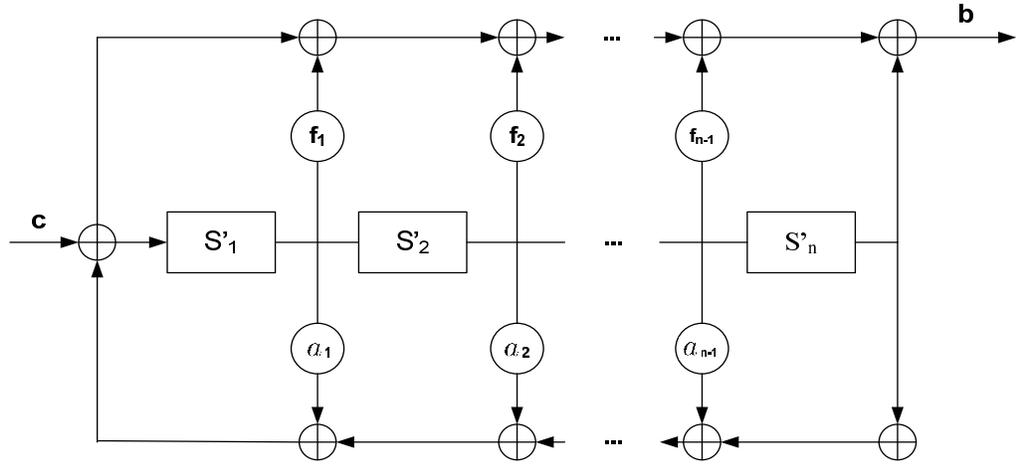}
  \caption{An encoder $\bar{C}$ in $GF(q)$, described by $q(x)=\frac{1+f_1x+\cdots+f_{n-1}x^{n-1}+x^{n}}{1+a_1x+\cdots+a_{n-1}x^{n-1}+x^{n}}$.}\label{example encoder}
\end{figure}

 If we map the structure in Fig. \ref{example encoder} to the convolutional structure in Fig. \ref{example convolutional encoder}, where each $+$ is replaced by $*$. The output of the dual encoder in Fig. \ref{example convolutional encoder} is different from the decoding output of the BCJR MAP forward decoding output, resulted from the correlation of the encoder memories of $\bar{C}$. Thus, when memories are correlated, the dual encoder cannot be used as an equivalent MAP forward decoding. To eliminate the memory correlation, we can multiply both the numerator and denominator of polynomial $q(x)$ by a common polynomial without actually changing the polynomial of $q(x)$. In order to obtain such a common polynomial, let us first define the minimum complementary polynomial for a given polynomial $a(x)$ as the polynomial of the smallest degree,
 \begin{align}
z(x)=1+z_1 x+ \cdots+z_{l-1}x^{l-1}+x^l,
\end{align}
such that
\begin{align}
a(x)z(x)=1+x^{n+l}.
\end{align}

 \vspace{5mm}
\begin{figure}[!ht]
   \centering
  \includegraphics[width=4.4in,height=2in]{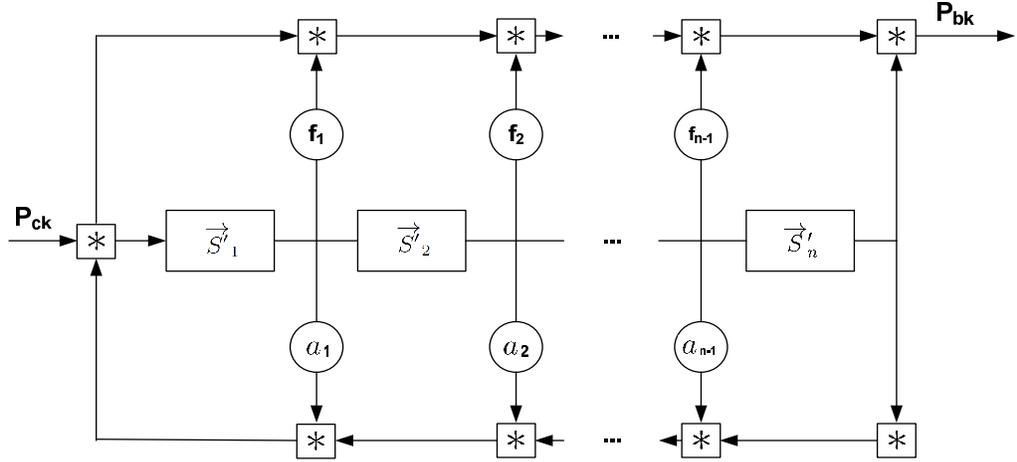}
  \caption{The dual encoder of the SISO forward decoding, described by $q(x)=\frac{1+f_1x+\cdots+f_{n-1}x^{n-1}+x^{n}}{1+a_1x+\cdots+a_{n-1}x^{n-1}+x^{n}}$.}\label{example convolutional encoder}
\end{figure}

Since $a(x)$ always divides $x^{q^n-1}+1$, the minimum complementary polynomial of $a(x)$ always exists. Let $f(x)z(x)=\left(1+f_1x+\cdots+f_{n-1}x^{n-1}+x^{n}\right)\left(1+z_1 x+ \cdots+z_{l-1}x^{l-1}+x^l\right)=1+h_1x+\cdots+h_{n+l-1}x^{n+l-1}+x^{n+l}$ and let $\overrightarrow{S'}_j(k), j=1, 2, \cdots, n+l,$ denote the memory of the $j$-th shift register of encoder $\bar{C}$, generated by $q(x)=\frac{f(x)z(x)}{a(x)z(x)}=\frac{1+h_1x+\cdots+h_{n+l-1}x^{n+l-1}+x^{n+l}}{1+x^{n+l}}=1+\frac{h_1x+\cdots+h_{n+l-1}x^{n+l-1}}{1+x^{n+l}}$. In encoder $\bar{C}$, the output is given by
\begin{align}\label{output at time k}
b_k=c_k+h_1\overrightarrow{S'}_1(k-1)+h_2\overrightarrow{S'}_2(k-1)+\cdots+h_{n+l-1}\overrightarrow{S'}_{n+l-1}(k-1),
\end{align}
and the memory of shift registers for encoder $\bar{C}$ can be expressed as
\begin{align}\label{s1k}
&\overrightarrow{S'}_1(k)=c_k+\overrightarrow{S'}_{n+l}(k-1)\\ \label{sjk}
&\overrightarrow{S'}_j(k)=\overrightarrow{S'}_{j-1}(k-1), j \geq 2.
\end{align}

Note that the additions and multiplications in the above three equations are performed in $GF(q)$. If we denote $h_i\overrightarrow{S'}_i(k-1), i=1, 2, \cdots, n+l-1$, as one symbol in $GF(q)$, then the value of $h_i\overrightarrow{S'}_i(k-1)$ equals $\overrightarrow{S'}_i(k-1)$ multiplied by $h_i$ in $GF(q)$. The pmf of $h_i\overrightarrow{S'}_i(k-1)$ is denoted by ${\bf P_{h_i\overrightarrow{S'}_i(k-1)}}$ and can be derived by cyclically shifting the $j$th element of $\bf P_{\overrightarrow{S'}_i(k-1)}$ to position $[jh_i]_q$. Let $\Pi_{h_i}$ denote such permutation of $\bf P_{\overrightarrow{S'}_i(k-1)}$ by $h_i$, where each $j$th element in $\bf P_{\overrightarrow{S'}_i(k-1)}$ is cyclically shifted to the $[jh_i]_q$ in $\Pi_{h_i}\bf P_{\overrightarrow{S'}_i(k-1)}$ \cite{GF04}. Based on (\ref{output at time k}), the probability that $b_k=\omega$ can be written as
\begin{align}\label{pbk=w}
p_{\bf b_k}(\omega)&=p\left\{b_k=\omega\right\}\nonumber\\
&=p\left\{c_k+h_1\overrightarrow{S'}_1(k-1)+h_2\overrightarrow{S'}_2(k-1)+\cdots+h_{n+l-1}\overrightarrow{S'}_{n+l-1}(k-1)=\omega\right\}.
\end{align}
Because $c_k$ and $h_1\overrightarrow{S'}_1(k-1)+h_2\overrightarrow{S'}_2(k-1)+\cdots+h_{n+l-1}\overrightarrow{S'}_{n+l-1}(k-1)$ are mutually independent, equation (\ref{pbk=w}) can be written as
\begin{align}\label{probability bk}
&p_{\bf b_k}(\omega)\nonumber\\
&=\sum_{c_k=0}^{q-1}p\left\{ c_k\right\} p\left\{h_1\overrightarrow{S'}_1(k-1)+h_2\overrightarrow{S'}_2(k-1)+\cdots+h_{n+l-1}\overrightarrow{S'}_{n+l-1}(k-1)=\omega - c_k\right\}
\end{align}
According to the definition of convolution, the probability mass function of ${b_k}$ can be expressed from (\ref{probability bk}) as
\begin{align}
{\bf P_{b_k}}={\bf P_{c_k}} \ast {\bf P_{h_1\overrightarrow{S'}_1(k-1)+h_2\overrightarrow{S'}_2(k-1)+\cdots+h_{n+l-1}\overrightarrow{S'}_{n+l-1}(k-1)}}.
\end{align}

Similarly, due to the independence of the memories of shift registers, the pmf vectors of $b_k$ can be represented as the convolution of the pmf vectors of $c_k$ and $h_j\overrightarrow{S'}_j(k)$
\begin{align}\label{input and output relationship}
{\bf P_{b_k}}&={\bf P_{c_k}} \ast {\bf P_{h_1\overrightarrow{S'}_1(k-1)}} \ast \cdots \ast {\bf P_{h_{n+l-1}\overrightarrow{S'}_{n+l-1}(k-1)}}\nonumber\\
&={\bf P_{c_k}} \ast \Pi_{h_1} {\bf P_{\overrightarrow{S'}_1(k-1)}} \ast \cdots \ast \Pi_{h_{n+l-1}} {\bf P_{\overrightarrow{S'}_{n+l-1}(k-1)}}.
\end{align}
Similarly, following (\ref{s1k}) and (\ref{sjk}), shift register contents of the dual encoder are updated as follows
\begin{align}\label{update of S1k of forward decoder}
&{\bf P_{\overrightarrow{S'}_1(k)}}={\bf P_{c_k}} \ast {\bf P_{\overrightarrow{S'}_{n+l}(k-1)}}\\ \label{update of shift register contents in the forward decoder}
&{\bf P_{\overrightarrow{S'}_j(k)}}={\bf P_{\overrightarrow{S'}_{j-1}(k-1)}}.
\end{align}

Based on the above analysis, we can derive a simple structure for MAP forward decoding implemented using the convolutional encoders, described by $q(x)=1+\frac{h_1x+\cdots+h_{n+l-1}x^{n+l-1}}{1+x^{n+l}}$, as shown in Fig. \ref{dual encoder for forward decoding.}, where $*$ denotes the convolution operation and $\Pi_{h_i}$ denotes permutation. The input of the dual encoder is ${\bf P_{c_k}}$ and the output of the dual encoder ${\bf P_{b_k}}$. Here the convolution operation is performed on complex vectors.

\vspace{6mm}
\begin{figure}[!ht]
  \includegraphics[width=4.3in,height=1.6in]{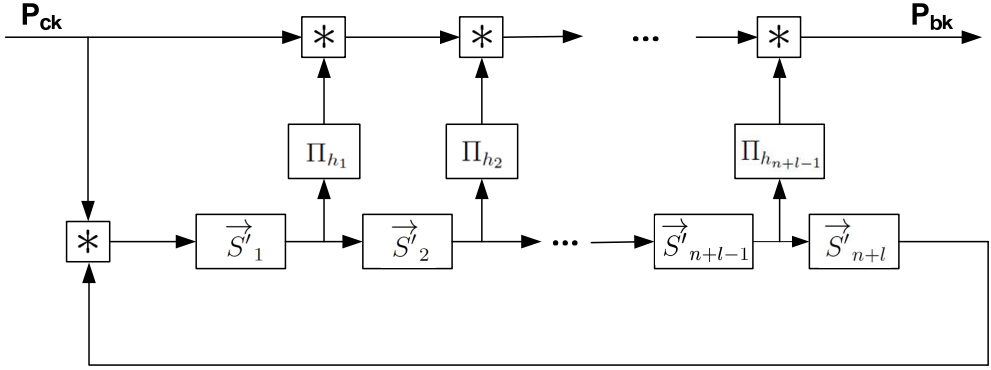}
  \caption{The dual encoder of the SISO forward decoding for the code $g(x)$, given by $q(x)=1+\frac{h_1x+\cdots+h_{n+l-1}x^{n+l-1}}{1+x^{n+l}}$.}\label{dual encoder for forward decoding.}
\end{figure}

Equations (\ref{input and output relationship}), (\ref{update of S1k of forward decoder}), (\ref{update of shift register contents in the forward decoder}) and Fig. \ref{dual encoder for forward decoding.} reveal an interesting relationship of the convolutional encoder and the SISO forward decoder for rate-1 convolutional codes in $GF(q)$. This can be summarized in the following theorem.

\newtheorem{theorem}{\bf Theorem}
\begin{theorem}\label{theorem 1}
{\bf Dual encoder for SISO MAP forward decoding:} For a rate-1 convolutional code in $GF(q)$, generated by $g(x)=\frac{1+a_1x+\cdots+a_{n-1}x^{n-1}+x^{n}}{1+f_1x+\cdots+f_{n-1}x^{n-1}+x^{n}}$, we define its dual encoder as the encoder with inverse generator polynomial of $g(x)$, given by $q(x)=\frac{1}{g(x)}=\frac{f(x)z(x)}{a(x)z(x)}=1+\frac{h_1x+\cdots+h_{n+l-1}x^{n+l-1}}{1+x^{n+l}}$. Then the SISO MAP forward decoding of convolutional codes can be implemented by its dual encoder of complex vectors, which is shown in Fig. \ref{dual encoder for forward decoding.}. The output of the dual encoder is the pmf of the information sequence. Note that all the operations in the dual encoder are convolution operations.
\end{theorem}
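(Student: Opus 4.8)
The plan is to prove that the pmf ${\bf P_{b_k}}$ delivered by the dual encoder of Fig.~\ref{dual encoder for forward decoding.} coincides, up to a normalizing constant, with the soft output of the BCJR forward recursion, which for a rate-$1$ code is the forward estimate $p(b_k=\omega\mid y_1,\dots,y_k)$. I would split this into two facts: (i) the dual encoder computes the exact distribution of $b_k$ regarded as a deterministic $GF(q)$-linear function of the soft code symbols, and (ii) for a rate-$1$ code that distribution is exactly the forward a posteriori probability of $b_k$. Fact (ii) rests on a structural observation: since $g(0)=1$, both the filter $\overrightarrow{c}=g(x)\overrightarrow{b}$ and its inverse $\overrightarrow{b}=q(x)\overrightarrow{c}$ are causal with unit constant term, so $\overrightarrow{b}\mapsto\overrightarrow{c}$ is a bijection of $GF(q)^L$ onto itself (unit lower-triangular over $GF(q)$) and $b_k$ depends only on $c_1,\dots,c_k$. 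Consequently the path sum that the forward recursion forms, $\sum\prod_{p=1}^{k}p(y_p\mid c_p)$ over all trellis paths with $b_k=\omega$, carries no parity constraint for a rate-$1$ code and may be rewritten as the free marginalization $\sum_{c_1,\dots,c_k:\,b_k(\overrightarrow{c})=\omega}\prod_{p=1}^{k}p(y_p\mid c_p)$. This is by definition the $GF(q)$-convolution of the soft inputs ${\bf P_{c_p}}$ transported by the field coefficients of $b_k(\overrightarrow{c})$, that is, exactly what the cascade of $\ast$ and $\Pi_{h_i}$ in (\ref{input and output relationship}) evaluates.

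The crux of the argument, and the step I expect to be the main obstacle, is showing that the \emph{incremental} convolutions executed by the shift register stay exact, rather than only the single global convolution over $c_1,\dots,c_k$. The convolution of two pmfs equals the pmf of their $GF(q)$-sum only when the two summands are independent; if two memories shared a code symbol, convolving their pmfs would double-count it and the recursion would drift away from BCJR. This is precisely the correlation obstruction noted before (\ref{output at time k}), and it is removed by the minimum complementary polynomial $z(x)$, after which the denominator of $q(x)$ becomes $1+x^{n+l}$. I would make this quantitative by unrolling (\ref{s1k})--(\ref{sjk}) to $\overrightarrow{S'}_j(k)=\overrightarrow{S'}_1(k-j+1)=\sum_{m\ge 0}c_{\,k-j+1-m(n+l)}$, so that the indices appearing in $\overrightarrow{S'}_j(k)$ all lie in the single residue class $k-j+1 \ (\mathrm{mod}\ n+l)$. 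For $j=1,\dots,n+l$ these $n+l$ classes are distinct, hence the memories depend on pairwise-disjoint sets of code symbols; similarly $c_k$ together with $\overrightarrow{S'}_1(k-1),\dots,\overrightarrow{S'}_{n+l-1}(k-1)$ in (\ref{output at time k}) occupy $n+l$ consecutive, and therefore distinct, classes. Under the memoryless-channel factorization $p(y_1,\dots,y_k\mid c_1,\dots,c_k)=\prod_p p(y_p\mid c_p)$, this disjointness makes the memories mutually independent, so every convolution in (\ref{output at time k}), (\ref{update of S1k of forward decoder}) and (\ref{update of shift register contents in the forward decoder}) is exact.

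With independence in hand I would finish by induction on $k$. The base case initializes the memories to the zero-state pmf, a unit mass at $0$. For the inductive step I would check that the updates (\ref{update of S1k of forward decoder})--(\ref{update of shift register contents in the forward decoder}) preserve the property that each ${\bf P_{\overrightarrow{S'}_j(k)}}$ is the true marginal pmf of its residue-class sum, and that (\ref{output at time k}) then produces the full distribution of $b_k(\overrightarrow{c})$; combined with the bijection of the first paragraph this equals the forward MAP output after the common normalization $\sum_{\omega}{\bf P_{b_k}}(\omega)$. The only remaining routine points are that multiplication by $h_i\in GF(q)$ acts on a pmf as the index permutation $\Pi_{h_i}$, so ${\bf P_{h_i\overrightarrow{S'}_i(k-1)}}=\Pi_{h_i}{\bf P_{\overrightarrow{S'}_i(k-1)}}$, and that the addition-to-convolution correspondence used throughout is convolution over the additive group of $GF(q)$, which reduces to the cyclic convolution that later permits the FFT implementation.
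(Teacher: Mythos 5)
Your proposal is correct, and it takes a genuinely different (and more self-contained) route than the paper's own proof. The paper's argument in Appendix \ref{proof of theorem 1} starts from the BCJR forward sum $P_{b_k}(\omega)=\sum_{(u',u)=U(b_k=\omega)}\alpha_{k-1}(u')\gamma_k(u',u)$ and immediately \emph{asserts} the factorization $\alpha_{k-1}(u')=\prod_{j=1}^{n+l}P_{\overrightarrow{S'}_j(k-1)}(u'_j)$ together with $\gamma_k(u',u)=P(c_k)$; after discarding the $(n+l)$-th register it reads off the multi-fold convolution. All of the nontrivial content is concentrated in that asserted factorization, which is exactly the mutual-independence statement you actually prove: your unrolling $\overrightarrow{S'}_j(k)=\sum_{m\ge 0}c_{k-j+1-m(n+l)}$, confining each register to its own residue class modulo $n+l$, is the quantitative justification that the paper only motivates informally in the main text (the memory-correlation discussion and the purpose of $z(x)$). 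Likewise, your observation that the rate-1 map $\overrightarrow{b}\mapsto\overrightarrow{c}$ is unit lower-triangular, hence bijective, so that the forward recursion is an unconstrained marginalization over $c_1,\dots,c_k$, is what implicitly licenses $\gamma_k(u',u)=P(c_k)$ (uniform inputs, every state sequence feasible) in the paper's first line. The two arguments therefore rest on the same mathematical facts, but the paper assumes them inside the BCJR formalism while you derive them; the paper's version buys brevity and direct alignment with BCJR notation, yours buys rigor at precisely the two places where the paper's proof has gaps, and it makes visible why the construction would fail without $z(x)$ (shared symbols between registers would be double-counted by the convolution). Two minor points to tidy: the permutation identity ${\bf P}_{h_i\overrightarrow{S'}_i(k-1)}=\Pi_{h_i}{\bf P}_{\overrightarrow{S'}_i(k-1)}$ requires $h_i\neq 0$ (when $h_i=0$ the term is a point mass at $0$ and acts as the identity for convolution, so it is simply omitted, consistent with the figure), and your residue-class bookkeeping tacitly treats symbols with index $\le 0$ as fixed to zero, i.e., the all-zero initial state, which your induction base does cover.
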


\begin{proof}
See Appendix \ref{proof of theorem 1}.
\end{proof}

The complexity of the dual encoder for forward decoding in Fig. \ref{dual encoder for forward decoding.} is dominated by the convolution operations, and thus scales as $\mathcal{O}\left(q^2K\right)$ \cite{RG68}. The complexity can be further reduced by applying the FFT on the probability vectors involved in the convolutions \cite{DM99}. Let $F[{\bf P_1}]=\left(F[P_1](0), F[P_1](1), \cdots, F[P_1](q-1)\right)$ and $F[{\bf P_2}]=\left(F[P_2](0), F[P_2](1), \cdots, F[P_2](q-1)\right)$ be the FFT transformed vectors of ${\bf P_1}$ and ${\bf P_2}$, we define the Hadamard product, which is the element-wise multiplication of two vectors \cite{Robert12}, of $F[{\bf P_1}]$ and $F[{\bf P_2}]$ as $F[{\bf P_1}] \circ F[{\bf P_2}]=(F[P_1](0)F[P_2](0), F[P_1](1)F[P_2](1), \cdots, $$F[P_1](q-1)F[P_2](q-1))$. The Fourier transform of the convolution of two functions equals the product of the Fourier transforms of these two functions \cite{OSB09}. Therefore, (\ref{input and output relationship}), (\ref{update of S1k of forward decoder}) and (\ref{update of shift register contents in the forward decoder}) can be expressed as
\begin{align}
&{\bf P_{b_k}}=F^{-1}\left\{{F[\bf P_{c_k}}] \circ F[\Pi_{h_1} {\bf P_{\overrightarrow{S'}_1(k-1)}} ]\circ \cdots  \circ F[\Pi_{h_{n+l-1}} {\bf P_{\overrightarrow{S'}_{n+l-1}(k-1)}}]\right\}\\
&{\bf P_{\overrightarrow{S'}_1(k)}}=F^{-1}\left\{ F[{\bf P_{c_k}}] \circ F[ {\bf P_{\overrightarrow{S'}_{n+l}(k-1)}}]\right\}\\
&{\bf P_{\overrightarrow{S'}_j(k)}}={\bf P_{\overrightarrow{S'}_{j-1}(k-1)}}.
\end{align}
Therefore, we propose an FFT dual encoder for forward decoding, shown in Fig. \ref{forward dual encoder}, where $\circ$ denotes the element-wise multiplication of two vectors and $F$ the FFT of a vector. Note that all the convolution operations in the dual encoder in Fig. \ref{dual encoder for forward decoding.} become the element-wise multiplication in Fig. \ref{forward dual encoder}, and this considerably reduces the complexity from $\mathcal{O}\left(\left(q^2\right)K\right)$ to $\mathcal{O}\left(\left(q\log_2 q\right)K\right)$. Note that the output of the FFT dual encoder is exactly the same as the output of the dual encoder for forward decoding.

\vspace{8mm}
\begin{figure}[!h]
  \includegraphics[width=4.6in,height=1.9in]{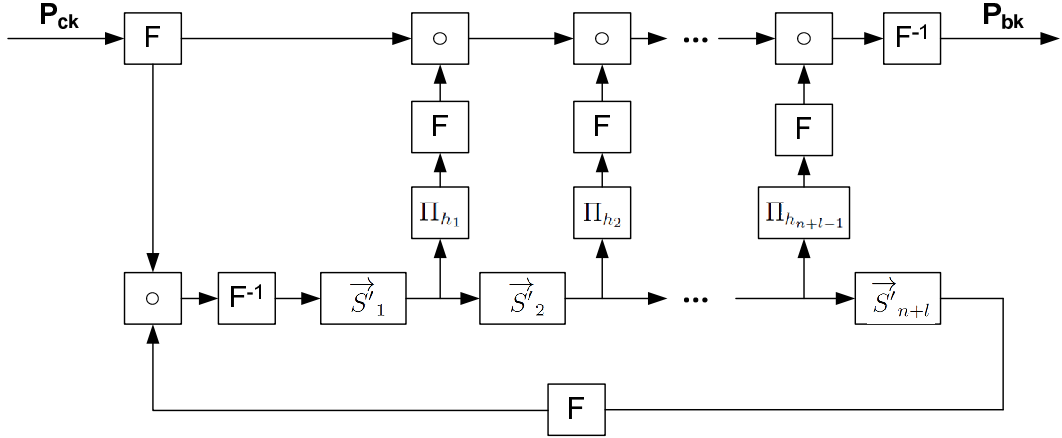}
  \caption{The FFT dual encoder of SISO forward decoding for the code $g(x)$.}\label{forward dual encoder}
\end{figure}

\section{Inversev encoder of SISO MAP backward decoding}\label{section backward dual encoder}
In this section, we propose an dual encoder for the BCJR MAP backward decoding of rate-1 convolutional codes in $GF(q)$. In the BCJR MAP backward decoding, the received signals are decoded in a time-reverse order. That is, given the received signal sequence ${\bf y}=\left(y_1, y_2, \cdots, y_{L}\right)$, the order of the signals to be decoded is from $y_{L}$, $y_{L-1},$ to $y_1$. In addition, in the backward decoding, the decoder has to follow the trellis in a reverse direction. Figs. \ref{forward representation} and \ref{trellis representation} show the encoder and its trellis, described by the generator polynomial $g(x)=\frac{1+2x}{1+x}$. Fig. \ref{backward trellis representation} shows the backward trellis, where the input to the decoder is at the right hand side of the decoder and its output is at the left hand side, which operates in a reverse direction of the conventional order.

\vspace{1mm}
\begin{figure}[!ht]
   \centering
  \includegraphics[width=2.2in,height=1.0in]{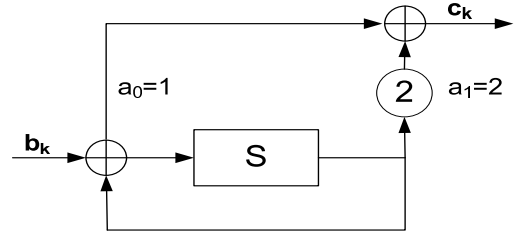}
  \caption{The encoder of code $g(x)=\frac{1+2x}{1+x}$.}\label{forward representation}
\end{figure}

\vspace{1mm}
\begin{figure}[!ht]
 \centering
  \includegraphics[width=1.7in,height=1.5in]{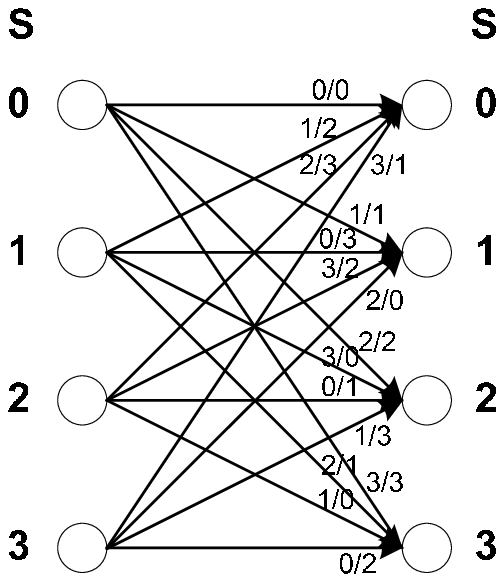}
  \caption{The trellis of code $g(x)=\frac{1+2x}{1+x}$.}\label{trellis representation}
\end{figure}

\vspace{1mm}
\begin{figure}[!ht]
 \centering
  \includegraphics[width=3.9in,height=1.6in]{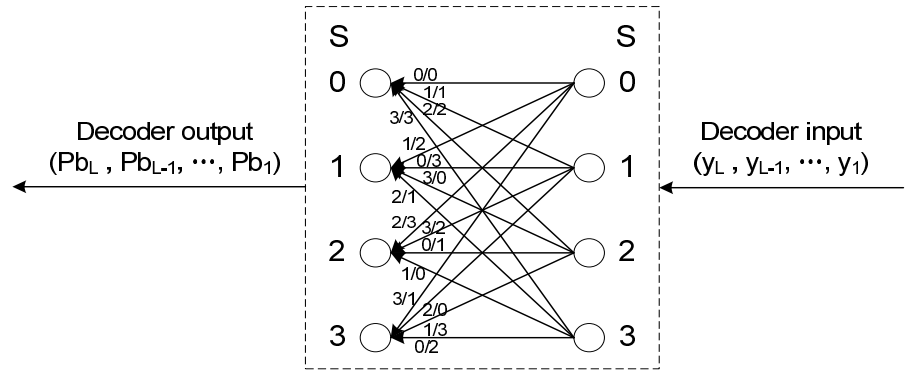}
  \caption{The backward trellis of code $g(x)=\frac{1+2x}{1+x}$.}\label{backward trellis representation}
\end{figure}

For ease of exposition, we propose to present the backward trellis in the forward direction where the decoder input and output are changed to the conventional order. Specifically, for a convolutional encoder, described by $g(x)=\frac{a(x)}{f(x)}=\frac{1+a_1x+\cdots+a_{n-1}x^{n-1}+x^{n}}{1+f_1x+\cdots+f_{n-1}x^{n-1}+x^{n}}$, if the labeling of the $k$th shift register in the encoder is changed from $S_k$ to $S_{n-k}$ and their respective coefficients are changed from from $a_k$ to $a_{n-k}, k=1, 2, \cdots, n$, and from $b_k$ to $b_{n-k}$, the resulting encoder is referred to as the reverse-memory labeling encoder of $g(x)$. For example, Fig. \ref{forward representation of the backward trellis} shows the forward representation of the backward trellis of code $g(x)=\frac{1+2x}{1+x}$. Its corresponding reverse-memory labeling encoder is shown in Fig. \ref{encoder of the backward trellis}.

\vspace{9mm}
\begin{figure}[!ht]
 \centering
  \includegraphics[width=3.9in,height=1.8in]{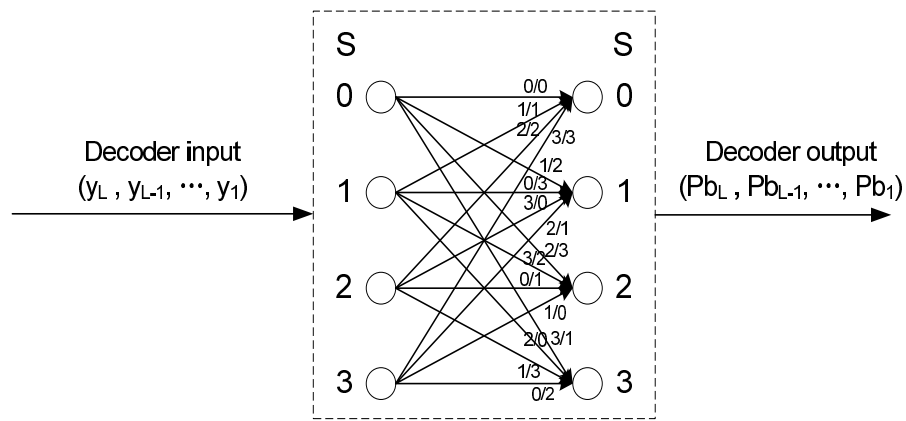}
  \caption{The equivalent forward representation of the backward trellis of code $g(x)=\frac{1+2x}{1+x}$.}\label{forward representation of the backward trellis}
\end{figure}

\vspace{5mm}
\begin{figure}[!ht]
 \centering
  \includegraphics[width=2.3in,height=1in]{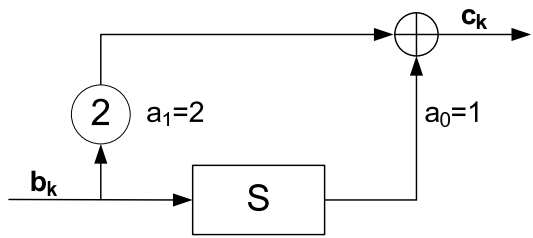}
  \caption{The encoder corresponds to the trellis of Fig. \ref{forward representation of the backward trellis}.}\label{encoder of the backward trellis}
\end{figure}

It is shown in \cite[Theorem 3]{YQSB13} that the relationship of the encoders for the forward and backward trellises can be extended to general rate-1 convolutional codes in $GF(q)$, as shown in the following theorem.
\begin{theorem}\label{theorem 2}
Given an encoder with generator polynomial $g(x)=\frac{a(x)}{f(x)}=\frac{1+a_1x+\cdots+a_{n-1}x^{n-1}+x^{n}}{1+f_1x+\cdots+f_{n-1}x^{n-1}+x^{n}}$, the forward representation of its backward trellis can be implemented by its reverse-memory labeling encoder of the same generator polynomial $g(x)$.
\end{theorem}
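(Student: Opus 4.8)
The plan is to prove the theorem by exhibiting a labeled-graph isomorphism between the forward representation of the backward trellis of the original encoder and the forward trellis of its reverse-memory labeling encoder. A trellis is completely specified by its branch set, where a branch at a given time step is a tuple $(\text{current state}, b_k, c_k, \text{next state})$. Hence it suffices to show that, after matching states through the memory-reversal permutation $\sigma$ that sends the $j$-th register to position $n-j$, the two trellises possess identical branches, labels included.

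First I would write down the forward state-update and output equations of the original encoder realizing $g(x)=\frac{a(x)}{f(x)}$. With state $\overrightarrow{S}(k-1)=(S_1(k-1),\dots,S_n(k-1))$ and input $b_k$, the shift-register recursion expresses the output $c_k$ as a $GF(q)$-linear combination of $b_k$ and the memory contents through the taps of $a(x)$ together with the feedback taps of $f(x)$, and expresses the next state $\overrightarrow{S}(k)$ through a shift combined with feedback. These equations are precisely the forward-time branches, so the backward trellis is obtained by reversing every branch: the branch carrying $\overrightarrow{S}(k-1)$ to $\overrightarrow{S}(k)$ is redrawn from $\overrightarrow{S}(k)$ to $\overrightarrow{S}(k-1)$ with the same symbol labels. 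Passing to the forward representation then merely renames the time axis so that $\overrightarrow{S}(k)$ becomes the current state and $\overrightarrow{S}(k-1)$ the next state.

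The central computation is to solve the original forward state equation for $\overrightarrow{S}(k-1)$ in terms of $\overrightarrow{S}(k)$ and the branch labels, that is, to invert the shift-with-feedback dynamics. Applying the coordinate permutation $\sigma$, namely relabeling $S_j$ as $S_{n-j}$ while simultaneously replacing the taps $a_j\to a_{n-j}$ and $f_j\to f_{n-j}$, I would verify coordinate by coordinate that this inverted transition coincides with the ordinary forward state-update and output equations of the reverse-memory labeling encoder. Because $a(x)$ and $f(x)$ are monic with constant term $1$, the two endpoint taps behave symmetrically under reversal, which is exactly what makes $\sigma$ align the feedforward and feedback connections. Matching the branch labels then yields the desired labeled-graph isomorphism, and the $GF(q)$ case proceeds along the same lines as the binary case established in Part I \cite{YQSB13}, with the field arithmetic carried symbolically.

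The main obstacle will be the inversion-and-relabeling step: reconciling the reversed-time, recursively generated state transition with the genuine forward recursion of the relabeled encoder. Since the code is recursive, reversing time is not merely reversing a shift, so the feedback terms governed by $f(x)$ must recombine correctly after the memory order is reversed, and one must check that no coefficient mismatch survives in $GF(q)$. Showing that $\sigma$ is the precise permutation that cancels the time reversal for all tap indices at once is the crux of the argument; once this is verified, the remaining label-matching and graph-isomorphism bookkeeping is routine.
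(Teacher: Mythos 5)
Your proposal is correct and takes essentially the same route as the paper: the paper gives no detailed argument for Theorem~\ref{theorem 2}, stating only that it ``can be proved similarly as the proof of Theorem 3'' of Part I, and that omitted argument is precisely the branch-level verification you outline --- invert the shift-with-feedback transition (possible because the endpoint taps of $a(x)$ and $f(x)$ are units) and check that the memory-reversal permutation together with the tap reversal $a_j \to a_{n-j}$, $f_j \to f_{n-j}$ reproduces the forward state-update and output equations of the relabeled encoder. Your sketch is in fact more explicit than the paper's one-line proof, though as a standalone argument it would still require carrying out the coordinate-by-coordinate verification you defer as the ``crux.''
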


\begin{proof}
This can be proved similarly as the proof of Theorem 3 in \cite{YQSB13}, and we omit it here.
\end{proof}

From Theorem \ref{theorem 1}, we know that the SISO forward decoding of a given convolutional code, generated by $g(x)=\frac{a(x)}{f(x)}$, can be implemented by its dual encoder described by $q(x)=\frac{f(x)z(x)}{a(x)z(x)}$, where $z(x)$ is the degree $l$ minimum complementary polynomial of $a(x)$. Then according to Theorem \ref{theorem 2}, the SISO backward decoding of the convolutional code can be implemented by its reverse-memory labeling encoder of $q(x)$. By combining Theorems \ref{theorem 1} and \ref{theorem 2}, we can obtain the dual encoder for SISO MAP backward decoding, which is summarized in the following Theorem.

\begin{theorem}\label{theorem 3}
{\bf dual encoder for SISO MAP backward decoding:} We consider a convolutional code, generated by $g(x)=\frac{a(x)}{f(x)}=\frac{1+a_1x+\cdots+a_{n-1}x^{n-1}+x^{n}}{1+f_1x+\cdots+f_{n-1}x^{n-1}+x^{n}}$. Let $z(x)$ be the degree-$l$ minimum complementary polynomial of $a(x)$. Its SISO backward decoding can be implemented by its dual encoder, described by $q(x)=\frac{f(x)z(x)}{a(x)z(x)}=1+\frac{h_1x+\cdots+h_{n+l-1}x^{n+l-1}}{1+x^{n+l}}$, with reverse-memory labeling and time-reverse input, shown in Fig. \ref{backward dual encoder convolution}.
\end{theorem}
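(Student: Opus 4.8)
The plan is to obtain the backward dual encoder by composing the two preceding theorems, rather than re-deriving the backward recursion from scratch. The starting observation is that SISO MAP backward decoding is simply SISO MAP forward decoding carried out on the backward trellis, with the received sequence fed in the reverse time order $y_L, y_{L-1}, \ldots, y_1$. This immediately accounts for the time-reverse input in the statement, so the remaining task is to identify the encoder whose forward dual implements forward decoding on the backward trellis.

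First I would invoke Theorem \ref{theorem 2}: the forward representation of the backward trellis of $g(x)$ is produced by the reverse-memory labeling encoder of $g(x)$. Hence SISO backward decoding of $g(x)$ coincides with SISO forward decoding of the reverse-memory labeling encoder of $g(x)$. Next I would apply Theorem \ref{theorem 1} to this reverse-memory labeling encoder, whose generator $\widehat{a}/\widehat{f}$ is of the same admissible form: its SISO forward decoding is implemented by its own dual encoder. At this point the candidate structure for backward decoding is the dual of the reverse-memory-labeled encoder, whereas the statement asks for the reverse-memory labeling of the dual encoder $q(x)$. The proof therefore reduces to showing that these two constructions coincide.

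The crux of the argument, and the step I expect to be the main obstacle, is to show that taking the dual (inverse generator polynomial together with the minimum complementary factor) commutes with reverse-memory labeling. Writing $\widehat{p}(x)=x^{\deg p}\,p(1/x)$ for the reciprocal polynomial, reverse-memory labeling sends $g=a/f$ to $\widehat{a}/\widehat{f}$, and both $\widehat{a}$ and $\widehat{f}$ remain monic with unit constant term because $a$ and $f$ are. I would handle numerator and denominator separately. For the denominator, $a(x)z(x)=1+x^{n+l}$ is self-reciprocal, so relabeling leaves the denominator $1+x^{n+l}$ unchanged, which already matches the denominator of the dual of $\widehat{a}/\widehat{f}$. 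For the numerator, I would show that the reciprocal $\widehat{z}$ of the minimum complementary polynomial $z$ of $a$ is exactly the minimum complementary polynomial of $\widehat{a}$: taking reciprocals of $a(x)z(x)=1+x^{n+l}$ yields $\widehat{a}(x)\widehat{z}(x)=1+x^{n+l}$, and minimality is preserved because the reciprocal map is a degree-preserving involution, so a strictly smaller complementary polynomial for $\widehat{a}$ would, on taking reciprocals, contradict the minimality of $z$ for $a$.

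Putting the pieces together, the dual of $\widehat{a}/\widehat{f}$ equals $\widehat{f}(x)\widehat{z}(x)/(1+x^{n+l})$, which is precisely the reverse-memory labeling of $q(x)=f(x)z(x)/(1+x^{n+l})$. Hence the structure obtained from Theorems \ref{theorem 1} and \ref{theorem 2} is exactly the reverse-memory labeling of the forward dual encoder $q(x)$, and together with the time-reversed input noted at the outset this establishes that SISO MAP backward decoding is implemented by the dual encoder $q(x)$ with reverse-memory labeling and time-reverse input, as depicted in Fig. \ref{backward dual encoder convolution}.
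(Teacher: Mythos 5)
Your proposal is correct, but it takes a genuinely different route from the paper's own proof. The paper proves Theorem \ref{theorem 3} in Appendix \ref{proof of theorem 3} by a direct probabilistic computation that mirrors the proof of Theorem \ref{theorem 1}: it writes $P_{b_k}(\omega)=\sum_{(u',u)=U(b_k=\omega)}\beta_k(u)\gamma_k(u',u)$, factorizes the backward recursion variable $\beta_k(u)$ into a product of per-register pmfs $P_{\overleftarrow{S'}_i(k)}(u_i)$ of the backward encoder, drops the register on which the output does not depend, and identifies the resulting sum as the convolution ${\bf P_{c_k}} \ast \Pi_{h_1}{\bf P_{\overleftarrow{S'}_1(k-1)}} \ast \cdots \ast \Pi_{h_{n+l-1}}{\bf P_{\overleftarrow{S'}_{n+l-1}(k-1)}}$. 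You instead argue structurally: Theorem \ref{theorem 2} applied to $g(x)$, then Theorem \ref{theorem 1} applied to the reverse-labeled (reciprocal) encoder $\widehat{a}/\widehat{f}$, plus a commutation lemma showing that dualization and reverse-memory labeling commute --- the key step being that the reciprocal $\widehat{z}$ of the minimum complementary polynomial of $a$ is the minimum complementary polynomial of $\widehat{a}$ (your minimality argument via the degree-preserving involution is sound, as is the observation that $1+x^{n+l}$ is self-reciprocal so the denominator is unaffected). Your route is closer to the paper's informal motivation preceding the theorem (which composes the two theorems, though it applies Theorem \ref{theorem 2} to $q(x)$ directly rather than to $g(x)$, thereby sidestepping the commutation issue you had to address). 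What your approach buys: it reuses Theorem \ref{theorem 1} wholesale, requires no new probabilistic computation, and makes explicit an algebraic fact the paper leaves implicit. What the paper's approach buys: it is self-contained at the level of the BCJR recursions and does not lean on Theorem \ref{theorem 2}, whose proof the paper only references from Part I; it also directly exhibits that the backward register pmfs compute the correct a posteriori probabilities, which in your argument is inherited implicitly through the chain of equivalences.
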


\begin{proof}
See Appendix \ref{proof of theorem 3}.
\end{proof}

\vspace{6mm}
\begin{figure}[!ht]
  \includegraphics[width=4.3in,height=1.5in]{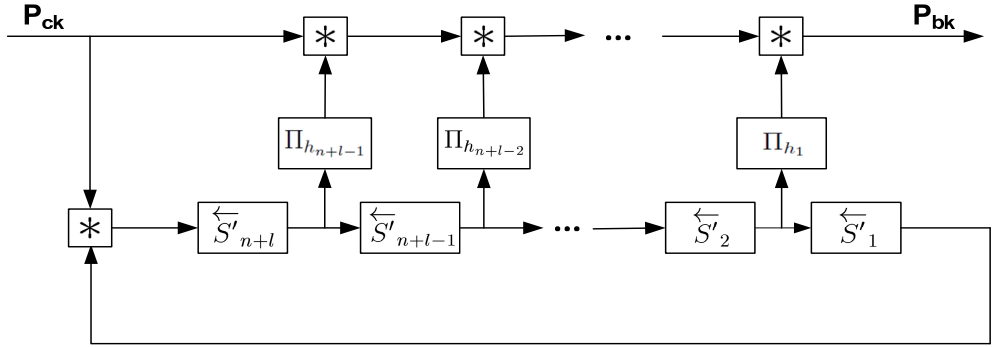}
  \caption{The dual encoder of the SISO backward decoding for the code $g(x)$, given by $q(x)=1+\frac{h_1x+\cdots+h_{n+l-1}x^{n+l-1}}{1+x^{n+l}}$.}\label{backward dual encoder convolution}
\end{figure}

The computational complexity of the dual encoder for the SISO backward decoding is dominated by the convolution operation. Similar to the forward decoding, we can apply FFT to further reduce the complexity of dual encoder for backward decoding. The FFT backward dual encoder is shown in Fig. \ref{backward dual encoder}.

\vspace{10mm}
\begin{figure}[!ht]
  \includegraphics[width=4.6in,height=2in]{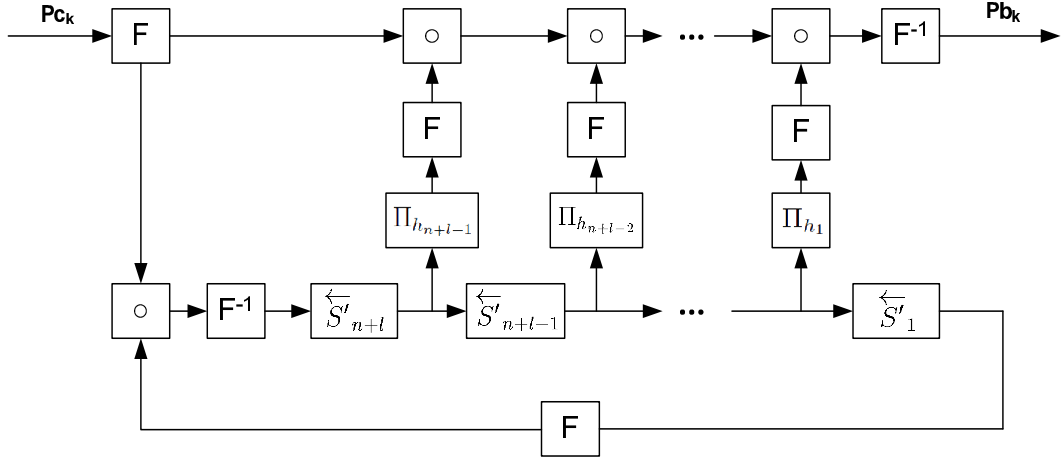}
  \caption{The FFT dual encoder of SISO backward decoding for the code $g(x)$.}\label{backward dual encoder}
\end{figure}

\section{The representation of bidirectional SISO MAP decoding}\label{shift register contents combining}
In the previous two sections, dual encoders for SISO MAP forward and backward decoding have been proposed. Based on the derived dual encoder structures, in this section, we represent the bidirectional SISO decoder by linearly combining shift register contents of the dual encoders for SISO MAP forward and backward decoding. We prove mathematically that such linear combining achieves exactly the same output as the bidirectional BCJR MAP decoding.

In the bidirectional BCJR MAP decoding, the APPs derived from the forward and backward recursions are combined at the same state at each time unit to obtain the desired decoding output. Therefore, it is usually assumed that the encoder begins with and ends at the all-zero state \cite{LC83}. The proposed dual encoder will produce the same output as the BCJR MAP algorithm when the forward and backward dual encoders have the same state at each time unit. As will be discussed shortly, this is ensured if the proposed dual encoder begins with and terminates at the all-zero state. To achieve this, tail symbols are added at the end of the code sequence.

Let us consider an encoder $\bar{C}$ of memory length $n+l$ in $GF(q)$, described by $q(x)=\frac{1}{g(x)}=\frac{f(x)z(x)}{a(x)z(x)}=1+\frac{h_1x+ \cdots + h_{n+l-1}x^{n+l-1}}{1+x^{n+l}}$. If the input to the encoder $\bar{C}$ is a codeword ${\overrightarrow{c}}=\left(c_1, c_2, \cdots, c_L\right)$, generated by $g(x)$, the output of the encoder $\bar{C}$ the decoded information sequence $\overrightarrow{b}$. Let us define $\left(c_{L+1},...,c_{L+n+l}\right)$ as the tail-bits required to terminate $\bar{C}$ at the all-zero state. Then following an analysis similar to that in \cite{YQSB13}, we can prove that the tail-biting convolutional encoder $\bar{C}$ has the following property.

\newtheorem{lemma}{\bf Lemma}
\begin{lemma}\label{dual decoder and encoder states returen to zero}
The tail-bits that terminate the encoder $\bar{C}$, described by $q(x)=1+\frac{h_1x+ \cdots + h_{n+l-1}x^{n+l-1}}{1+x^{n+l}}$, at the all-zero state also terminate the encoder C, generated by $g(x)=\frac{a(x)}{f(x)}$, at the all-zero state.
\end{lemma}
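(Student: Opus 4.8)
The plan is to realize both encoders in direct (controller-canonical) form, track a single internal sequence for each, and then link the two internal sequences through the polynomial $z(x)$. For the encoder $C$ generated by $g(x)=a(x)/f(x)$, I would introduce the internal variable $w(x)=b(x)/f(x)$, so that $c(x)=a(x)w(x)$ and $w_k=b_k-f_1w_{k-1}-\cdots-f_nw_{k-n}$. In this realization the shift-register contents of $C$ after time $k$ are exactly the $n$ consecutive values $\big(w_k,w_{k-1},\ldots,w_{k-n+1}\big)$, so $C$ sits in the all-zero state after time $T$ precisely when $w_T=w_{T-1}=\cdots=w_{T-n+1}=0$. For the dual encoder $\bar C$, whose feedback polynomial is $1+x^{n+l}$, I would set $u_k:=\overrightarrow{S'}_1(k)$; then (\ref{s1k})--(\ref{sjk}) give $\overrightarrow{S'}_j(k)=u_{k-j+1}$ and $u_k=c_k+u_{k-(n+l)}$, so the state of $\bar C$ after time $T$ is $\big(u_T,\ldots,u_{T-(n+l)+1}\big)$ and $\bar C$ terminates at the all-zero state after time $T$ iff these $n+l$ consecutive values of $u$ vanish. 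Throughout I would invoke the standing assumption that both encoders start from the all-zero state at time $0$, which makes $w$ and $u$ well defined.

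The crux is a clean algebraic link between the two internal sequences. Since $c(x)=a(x)w(x)$ while $u(x)=c(x)/\big(1+x^{n+l}\big)=c(x)/\big(a(x)z(x)\big)$, the factor $a(x)$ cancels and I obtain
\begin{align}
w(x)=z(x)\,u(x).
\end{align}
Reading this coefficient-wise with $z(x)=1+z_1x+\cdots+z_{l-1}x^{l-1}+x^{l}$ shows that each $w_k$ is a fixed $GF(q)$-linear combination of the $l+1$ consecutive values $u_k,u_{k-1},\ldots,u_{k-l}$. Consequently, the $n$ values $w_T,\ldots,w_{T-n+1}$ that determine the state of $C$ depend only on the $u$-values with indices in the range $T-(n+l)+1,\ldots,T$, which is exactly the block of $n+l$ consecutive values that constitutes the state of $\bar C$ after time $T$.

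From here the lemma is immediate: taking $T=L+n+l$, if the appended tail symbols $c_{L+1},\ldots,c_{L+n+l}$ drive $\bar C$ to the all-zero state, then $u_{T-(n+l)+1}=\cdots=u_T=0$, and the relation $w=z\,u$ forces $w_T=\cdots=w_{T-n+1}=0$, so $C$ is also at the all-zero state. I expect the main obstacle to be bookkeeping rather than conceptual: one must fix realizations of the two encoders that genuinely match the shift-register structures of Figs.~\ref{Encoder of a rate-1 non-binary convolutional code} and \ref{example convolutional encoder}, verify that their states are the claimed windows of $w$'s and $u$'s, and confirm that the index window needed for $C$'s state has width exactly $n+l$ so that it coincides with $\bar C$'s state window. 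Care with the $GF(q)$ sign conventions in the feedback recursions (where $+$ and $-$ differ once $q$ is not a power of $2$) is the only other point that needs attention; the identity $a(x)z(x)=1+x^{n+l}$, guaranteed by the definition of the minimum complementary polynomial, supplies the cancellation that makes $w=z\,u$ hold.
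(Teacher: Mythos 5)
Your proof is correct. Be aware, first, that the paper itself never actually proves this lemma: it is stated with no appendix proof, justified only by the remark that it follows ``an analysis similar to that in \cite{YQSB13}'' (the binary-code Part I), so there is no in-paper argument to compare against, and your self-contained derivation genuinely fills that hole. The substance of your argument is sound: identifying the state of $C$ with the window $\left(w_T,\ldots,w_{T-n+1}\right)$ of the internal sequence $w(x)=b(x)/f(x)$, identifying the state of $\bar{C}$ with the window $\left(u_T,\ldots,u_{T-(n+l)+1}\right)$ of $u(x)=c(x)/\left(a(x)z(x)\right)$ (which is exactly what the paper's update equations (\ref{s1k})--(\ref{sjk}) encode, since they are the controller-canonical recursion for denominator $1+x^{n+l}$), and then cancelling $a(x)$ --- legitimate because $a(0)=1$ makes $a(x)$ a unit in $GF(q)[[x]]$ --- to obtain $w(x)=z(x)u(x)$. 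The index bookkeeping you worried about does close: the $n$ values of $w$ forming $C$'s state involve $u$-indices spanning exactly the width-$(n+l)$ window forming $\bar{C}$'s state, since $T-n+1-l=T-(n+l)+1$, so vanishing of $\bar{C}$'s state forces vanishing of $C$'s state. In fact you prove something slightly stronger than the lemma: at \emph{every} time instant the state of $C$ is a fixed $GF(q)$-linear function of the state of $\bar{C}$, which is also the structural fact underlying Lemma \ref{same state transitions} and the combining step of Theorem \ref{theorem 4}. The two caveats you flag are real but are inherited from the paper rather than defects of your argument: the controller-canonical realization you assume for Fig.~\ref{Encoder of a rate-1 non-binary convolutional code} is the one consistent with the paper's own equations for $\bar{C}$, and the $+$ versus $-$ ambiguity in the feedback recursion and in the definition $a(x)z(x)=1+x^{n+l}$ is harmless precisely in characteristic $2$, which covers all of the paper's examples.
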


\begin{lemma}\label{same state transitions}
For a tail-biting convolutional encoder $\bar{C}$, generated by $q(x)$, and a given input sequence $(c_1, c_2, \cdots, c_L, c_{L+1}, \cdots, c_{L+n+l})$,  we define its backward encoder as the encoder of the same generator polynomial with reverse-memory labeling and time-reverse input $(c_{L+n+l}, \cdots , c_{L+1},$ $c_L, \cdots, c_2, c_1)$. Then the tail-biting encoder $\bar{C}$ and its backward encoder arrive at the same state at any time $k$.
\end{lemma}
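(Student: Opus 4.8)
The plan is to track the shift-register contents of both encoders symbol-by-symbol and show, by induction on the time index, that at every instant the state of the forward tail-biting encoder $\bar{C}$ coincides with the state of its reverse-memory-labeling backward encoder fed with the time-reversed input. The key setup is that $\bar{C}$ is generated by $q(x)=1+\frac{h_1x+\cdots+h_{n+l-1}x^{n+l-1}}{1+x^{n+l}}$, so by (\ref{s1k}) and (\ref{sjk}) its memory evolves as $\overrightarrow{S'}_1(k)=c_k+\overrightarrow{S'}_{n+l}(k-1)$ and $\overrightarrow{S'}_j(k)=\overrightarrow{S'}_{j-1}(k-1)$ for $j\geq 2$. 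The crucial structural fact is that the denominator $1+x^{n+l}$ is a pure delay-plus-feedback of length $n+l$: unrolling the recursion shows that each memory cell $\overrightarrow{S'}_j(k)$ is a fixed linear combination (over $GF(q)$) of the input symbols $c_1,\dots,c_k$ determined entirely by the feedback polynomial, and because the input sequence is extended by exactly $n+l$ tail symbols that drive $\bar{C}$ to the all-zero state (using Lemma~\ref{dual decoder and encoder states returen to zero}), the full state vector is a deterministic function of the windowed input.

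First I would write the state vector of $\bar{C}$ at time $k$ explicitly as $\overrightarrow{S}(k)=(\overrightarrow{S'}_1(k),\dots,\overrightarrow{S'}_{n+l}(k))$ and express each component as a convolution sum of the inputs with the impulse response of the $1/(1+x^{n+l})$ feedback shift register. Next I would write down the corresponding state vector $\overrightarrow{T}(k)$ of the backward encoder, which by definition has its memory cells relabeled $S_k\mapsto S_{n+l-k}$ and is driven by the reversed sequence $(c_{L+n+l},\dots,c_1)$; its recursion is the mirror image of (\ref{s1k})--(\ref{sjk}). I would then set up the index correspondence $k \leftrightarrow L+n+l-k$ between the two time axes and verify the base case at the boundary, where Lemma~\ref{dual decoder and encoder states returen to zero} guarantees both encoders sit in the all-zero state. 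The inductive step consists of checking that one step of the forward recursion for $\overrightarrow{S}$ is carried into one step of the backward recursion for $\overrightarrow{T}$ under the reverse-memory relabeling: the feedback tap that closes the length-$(n+l)$ loop in the forward encoder becomes, after reversing both the cell ordering and the input ordering, exactly the feedback tap of the backward encoder, so the relabeled components match at the shifted time index.

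The main obstacle will be bookkeeping the two simultaneous reversals cleanly, namely the reversal of the memory-cell labeling and the time-reversal of the input, and confirming that they compose to give an exact component-wise match rather than a match only up to some permutation or shift. Concretely, I expect the delicate point to be the handling of the feedback wrap-around term $\overrightarrow{S'}_{n+l}(k-1)$ in (\ref{s1k}): under the reversal this term must land precisely in the cell that feeds the corresponding input addition in the backward encoder, and verifying this requires using the special form $1+x^{n+l}$ of the denominator (equivalently, that the state transition is a cyclic shift of the input window). Once the relabeling is shown to commute with the cyclic feedback structure, the equality of states at every $k$ follows immediately by the induction, and the tail-biting termination from Lemma~\ref{dual decoder and encoder states returen to zero} guarantees the induction is anchored correctly at both ends of the frame.
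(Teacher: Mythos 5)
A structural note first: the paper never actually proves this lemma --- both Lemma~1 and Lemma~2 are stated bare, with the surrounding text deferring to ``an analysis similar to that in'' Part~I \cite{YQSB13} --- so your proposal supplies an argument the paper omits rather than paralleling one it contains. On the merits, your plan is correct and would compile into a complete proof. The induction you sketch does close: anchor at the boundary pairing forward time $L+n+l$ with backward time $0$, where the forward encoder is in the all-zero state and the backward encoder is at its initial all-zero state; then, under the correspondence $k \leftrightarrow L+n+l-k$, the step goes through because the forward transition consuming $c_k$ is invertible, namely $\overrightarrow{S'}_j(k-1)=\overrightarrow{S'}_{j+1}(k)$ for $j<n+l$ and $\overrightarrow{S'}_{n+l}(k-1)=c_k-\overrightarrow{S'}_1(k)$, while the reverse-memory-labeled encoder at its time $L+n+l-k+1$ consumes exactly the same symbol $d_{L+n+l-k+1}=c_k$ and updates by $\overleftarrow{S'}_j \leftarrow \overleftarrow{S'}_{j+1}$ with wrap cell $c_k-\overleftarrow{S'}_1$; these coincide cell by cell, and since the wrap-around map $u \mapsto c_k-u$ is an involution, the signs match even over $GF(q)$ of odd characteristic (the paper's own update equations (\ref{s1k})--(\ref{sjk}) are written in characteristic-$2$ style). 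Two small corrections to your bookkeeping: (i) the all-zero termination of $\bar{C}$ at time $L+n+l$ holds by the \emph{definition} of the tail symbols $(c_{L+1},\cdots,c_{L+n+l})$, not by Lemma~\ref{dual decoder and encoder states returen to zero} --- that lemma asserts the same symbols also terminate the original encoder $C$, a fact your proof never needs; (ii) the tail-biting condition is used only once, at the anchor, so your concern about anchoring ``at both ends'' is moot; one end suffices because the inductive step is purely local. Your alternative route via explicit convolution-sum expressions for each cell also works, but it invokes the tail-biting constraint once per residue class modulo $n+l$, so the induction is the cleaner of the two ideas you describe and is the one worth writing out.
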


In the decoding structures we introduced in the previous two sections, the input, output and shift register contents of dual encoders for forward and backward decoding are pmf vectors. To derive the bidirectional SISO decoder output, we need to combine the shift register contents of dual encoders for forward and backward decoding in an optimal way. Let ${\bf P_{S'_j(k)}}=[ p_{S'_j(k)}(0), p_{S'_j(k)}(1),$ $ p_{S'_j(k)}(\alpha), \cdots, p_{S'_j(k)}(\alpha^{q-2})]$ denote the combined pmf of the $j$th shift register of the combined dual encoder at time $k$. Since ${\bf P_{\overrightarrow{S'}_j(k)}}$ and ${\bf P_{\overleftarrow{S'}_j(k)}}$ are obtained from the forward decoding based on the received signals from time 1 to $k$ and that from backward decoding based on the received signals from time $L+n+l$ to $k+1$, they are independent. Furthermore, as shown in Lemma \ref{same state transitions}, for tail-biting encoder $\bar{C}$, generated by $q(x)$, forward and backward encoders will arrive at the same state at time $k$. Therefore, in the optimal combining, we have
\begin{align}\label{shift reister combining}
{\bf P_{S'_j(k)}}={\bf P_{\overrightarrow{S'}_j(k)}}\circ {\bf P_{\overleftarrow{S'}_j(k)}}.
\end{align}

Based on the dual encoder structure in Fig. \ref{dual encoder for forward decoding.}, the bidirectional SISO MAP decoding can be implemented by the proposed dual encoder with combined shift register contents. The output of the combined dual encoder is given by
\begin{align}\label{bidirectional decoder output}
{\bf P_{b_k}}={\bf P_{c_k}} \ast \Pi_{h_1} {\bf P_{S'_1(k-1)}} \ast \cdots \ast \Pi_{h_{n+l-1}} {\bf P_{S'_{n+l-1}(k-1)}}.
\end{align}

As shown in the following theorem, such  combining will produce exactly the same output as the bidirectional BCJR MAP algorithm.

\begin{theorem}\label{theorem 4}
We can represent the bidirectional SISO MAP decoder by linearly combining shift register contents of dual encoders for forward and backward decoding, as shown in (\ref{shift reister combining}) and (\ref{bidirectional decoder output}). This decoder produces exactly the same decoding output as the bidirectional BCJR MAP decoding algorithm.
\end{theorem}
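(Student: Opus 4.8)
The plan is to reduce the claim to the factorization that underlies the BCJR algorithm, namely that the symbol a posteriori probability can be written as a marginalization over trellis transitions, $p(b_k=\omega\mid\overrightarrow{y})\propto\sum_{(s',s)}\alpha_{k-1}(s')\,\gamma_k(s',s)\,\beta_k(s)$, where the sum runs over those transitions $s'\to s$ carrying output symbol $b_k=\omega$, and $\alpha$, $\gamma$, $\beta$ are the forward metric, branch metric, and backward metric. First I would invoke Theorem~\ref{theorem 1} to identify each forward shift register content ${\bf P_{\overrightarrow{S'}_j(k)}}$ with the forward state posterior $p(S'_j(k)\mid y_1,\dots,y_k)$, i.e.\ with the $\alpha$-metric in the dual encoder's state coordinates, and symmetrically use Theorem~\ref{theorem 3} to identify ${\bf P_{\overleftarrow{S'}_j(k)}}$ with the backward posterior $p(S'_j(k)\mid y_{k+1},\dots,y_{L+n+l})$. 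Lemma~\ref{dual decoder and encoder states returen to zero} then supplies the matching boundary conditions at both ends, and Lemma~\ref{same state transitions} guarantees that the forward and backward encoders occupy the same state at every time $k$, so that the Hadamard product in (\ref{shift reister combining}) is taken between two posteriors of one and the same random variable $S'_j(k)$.

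The second step is to justify (\ref{shift reister combining}) itself. Because the dual encoder is a finite-state Markov machine, the past observations $y_1,\dots,y_k$ and the future observations $y_{k+1},\dots,y_{L+n+l}$ are conditionally independent given the state $S'_j(k)$; hence the normalized element-wise product of the two posteriors recovers the full posterior $p(S'_j(k)\mid\overrightarrow{y})$, which is exactly the quantity the bidirectional BCJR algorithm forms when it multiplies $\alpha_k$ by $\beta_k$ at a common state. I would make this precise by factoring the joint posterior into a past-dependent and a future-dependent part and applying Bayes' rule with the uniform state prior implied by the tail-biting termination.

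The final step is to show that feeding these combined posteriors through the output relation (\ref{output at time k}), evaluated as the convolution (\ref{bidirectional decoder output}), reproduces the transition marginalization above. Here I would exploit the shift-register structure (\ref{s1k})--(\ref{sjk}): only the first memory $S'_1$ absorbs the current input $c_k$, while $S'_2,\dots,S'_{n+l}$ merely shift, so the memories $S'_1(k-1),\dots,S'_{n+l-1}(k-1)$ entering the output are precisely the components of $S'(k)$ that are independent of $c_k$. Consequently the sum over transitions carrying $b_k=\omega$ should decouple into a sum over $c_k$ (weighted by the channel pmf ${\bf P_{c_k}}$, playing the role of $\gamma_k$) and over these $c_k$-independent state components (weighted by their combined posterior), and this decoupled sum is exactly the convolution on the right-hand side of (\ref{bidirectional decoder output}); matching the permutations $\Pi_{h_i}$ to the field multiplications $h_iS'_i$ in (\ref{output at time k}) then completes the identification.

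I expect the main obstacle to be this last step, specifically the bookkeeping of which received symbols inform which factor. The combined state posterior at time $k-1$ already incorporates $y_k$ through the backward recursion, while the convolution also multiplies in ${\bf P_{c_k}}$, so I must verify that the shift-register decomposition genuinely separates the $c_k$-dependent and $c_k$-independent parts of each transition and that no observation is counted twice. Establishing this decoupling rigorously---equivalently, that $c_k$ is conditionally independent of the shifted memories $S'_2(k),\dots,S'_{n+l}(k)$ given $\overrightarrow{y}$---is the crux on which equivalence with the bidirectional BCJR output rests; once it is in hand, normalizing both sides finishes the proof.
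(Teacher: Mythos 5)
Your route is the same as the paper's in Appendix \ref{proof of theorem 4}: write the bidirectional BCJR output as the transition marginalization $\sum_{(u',u)=U(b_k=\omega)}\alpha_{k-1}(u')\gamma_k(u',u)\beta_k(u)$, factor $\alpha_{k-1}(u')$ and $\beta_k(u)$ into products of per-register pmfs of the forward and backward dual encoders (your Theorem~\ref{theorem 1}/Theorem~\ref{theorem 3} identifications), discard the registers that do not influence the output, and recognize the remaining constrained sum as the convolution (\ref{bidirectional decoder output}) with the combined register pmfs of (\ref{shift reister combining}). The crux you leave open --- whether $y_k$ is counted twice --- is precisely the step the paper closes, and it closes the way you anticipate, via the shift structure. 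In the backward dual encoder with reverse-memory labeling, the recursion step that processes ${\bf P_{c_k}}$ writes it only into register $n+l$; the remaining registers are pure shifts, ${\bf P_{\overleftarrow{S'}_j(k-1)}}={\bf P_{\overleftarrow{S'}_{j+1}(k)}}$ for $j=1,\dots,n+l-1$, and hence depend only on $y_{k+1},\dots,y_{L+n+l}$. Since the numerator of $q(x)$ stops at $h_{n+l-1}x^{n+l-1}$, the output (\ref{bidirectional decoder output}) uses only registers $1,\dots,n+l-1$ at time $k-1$, so $y_k$ enters the combined output exactly once, through ${\bf P_{c_k}}$. Concretely, the paper drops $P_{\overrightarrow{S'}_{n+l}(k-1)}(u'_{n+l})$ from the forward factor and $P_{\overleftarrow{S'}_1(k)}(u_1)$ from the backward factor (the only time-$k$ component involving $c_k$), then re-indexes $\prod_{i=2}^{n+l}P_{\overleftarrow{S'}_i(k)}(u_i)=\prod_{j=1}^{n+l-1}P_{\overleftarrow{S'}_j(k-1)}(u'_j)$ using $u_i=u'_{i-1}$ from (\ref{sjk}); the summand then factors as $\prod_{j=1}^{n+l-1}P_{\overrightarrow{S'}_j(k-1)}(u'_j)P_{\overleftarrow{S'}_j(k-1)}(u'_j)\,P(c_k)$ over the constraint $\sum_{j=1}^{n+l-1}h_ju'_j+c_k=\omega$, which is exactly (\ref{bidirectional decoder output}).

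One caution on your middle step: justifying (\ref{shift reister combining}) as ``the normalized product of two posteriors of $S'_j(k)$ recovers $p(S'_j(k)\mid\overrightarrow{y})$'' requires past and future observations to be conditionally independent given the \emph{single} register $S'_j(k)$, which the Markov property does not supply (it holds given the full state vector, not one coordinate of it). The paper never needs this interpretation: it takes (\ref{shift reister combining}) as a definition and verifies agreement with BCJR at the level of the full-state transition sum, where only the factorization of $\alpha$, $\gamma$, $\beta$ and the re-indexing above are used. If you execute your third step at that level --- as you in fact propose to do --- the per-register posterior interpretation becomes unnecessary and your argument is complete, up to the overall normalization that MAP symbol decisions ignore anyway.
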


\begin{proof}
See Appendix \ref{proof of theorem 4}.
\end{proof}

To reduce computational complexity, FFT can be applied to (\ref{bidirectional decoder output})
\begin{align}
{\bf P_{b_k}}=F^{-1}\left\{{F[\bf P_{c_k}}]  \circ F[\Pi_{h_1} {\bf P_{S'_1(k-1)}} ] \circ \cdots  \circ F[\Pi_{h_{n+l-1}} {\bf P_{S'_{n+l-1}(k-1)}}]\right\}.
\end{align}

Next, let us present some simulation results to validate our proposed scheme. A BPSK modulation is assumed. A frame size of $L=256$ symbols is employed over AWGN channels.

The bit error rate (BER) of various 4-state and 16-state convolutional codes are shown in Figs. \ref{BER performance of $G(D)=1+D$ FFC code} to \ref{BER performance of $132/112$ GC code}. The curve ``dual encoder forward+backward'' refers to the direct summation of the forward and backward dual encoder outputs, and the curve ``dual encoder shift register combined output'' refers to the optimal combined output (\ref{bidirectional decoder output}).

Figs. \ref{BER performance of $G(D)=1+D$ FFC code} to \ref{BER performance of $132/112$ GC code} show that the direct summation of the forward and backward dual encoder outputs suffers from some performance loss when compared to the bidirectional BCJR MAP algorithm. The SNR loss relative to the bidirectional BCJR MAP algorithm is 0, 0.1, 0.48, 0.1 and 1 $\mbox{dB}$ for codes $g(x)=1+x$, $g(x)=1+3x+2x^2$, $g(x)=1+x+2x^2$, $g(x)=\frac{1+x}{1+2x}$ and $g(x)=\frac{1+3x+2x^2}{1+x+2x^2}$. However, the proposed optimal linear combining scheme achieves exactly the same performance as the bidirectional BCJR MAP algorithm.

\section{Conclusions}\label{conclusion}
In this paper, we investigated the BCJR MAP decoding of rate-1 convolutional codes in $GF(q)$. We observed an explicit relationship between the SISO  BCJR MAP forward and backward decoder of a convolutional code and its encoder. Based on this observation, we proposed dual encoders for forward and backward decoding. The input of the dual encoders is the probability mass function of the code symbols and the output of the dual encoders is the probability mass function of the information symbols. The bidirectional SISO decoder is implemented by linearly combining the shift register contents of the dual encoders for forward and backward decoding. The proposed dual encoders significantly reduced the computational complexity of the bidirectional BCJR MAP decoding from exponential to linear in terms of convolutional code constraint length. To further reduce the complexity, fast Fourier transform is employed. Mathematical proofs and simulation results validate that the proposed dual encoder with shift register contents combining produces exactly the same output as the BCJR MAP decoding algorithm.

\appendices
\section{proof of theorem 1}\label{proof of theorem 1}
We consider the BCJR forward decoding algorithm of a general convolutional code $g(x)$ in $GF(q)$. Its dual encoder for forward decoding is described by $q(x)=1+\frac{h_1x+ \cdots + h_{n+l-1}x^{n+l-1}}{1+x^{n+l}}$. If the state of the dual encoder transits from $\left(u'_1, u'_2, \cdots, u'_{n+l}\right)$ at time $k-1$ to $\left(u_1, u_2, \cdots, u_{n+l}\right)$ at time $k$ with input $c_k$, then the probability of $b_k=\omega$ can be expressed as
\begin{align}\label{ak-1rk}
P_{b_k}(\omega)&=P\left\{b_k=\omega|\overrightarrow{y}\right\}=\sum_{(u', u)=U(b_k=\omega)}\alpha_{k-1}\left(u'\right)\gamma_k\left(u', u\right)\nonumber\\
&=\sum_{(u',u)=U(b_k=\omega)} \prod_{j=1}^{n+l} P_{\overrightarrow{S'}_j(k-1)}(u'_j) P(c_k)\nonumber\\
&=\sum_{u'_1, u'_2, \cdots, u'_{n+l}, \sum_{j=1}^{n+l-1} h_j u'_j +c_k=\omega}\prod_{j=1}^{n+l} P_{\overrightarrow{S'}_j(k-1)}(u'_j) P(c_k).
\end{align}
According to the generator polynomial $q(x)=1+\frac{h_1x+ \cdots + h_{n+l-1}x^{n+l-1}}{1+x^{n+l}}$, the dual encoder output is independent of the shift register contents in $\overrightarrow{S'}_{n+l}(k-1)$, shown in Fig. \ref{forward dual encoder}. Therefore (\ref{ak-1rk}) can be written as
\begin{align}\label{ak-1rk2}
P_{b_k}(\omega)
&=\sum_{u'_1, u'_2, \cdots, u'_{n+l}, \sum_{j=1}^{n+l-1} h_j u'_j+c_k=\omega}\prod_{j=1}^{n+l-1} P_{\overrightarrow{S'}_j(k-1)}(u'_j) P(c_k)\nonumber\\
&=\sum_{u'_1, u'_2, \cdots, u'_{n+l}, \sum_{j=1}^{n+l-1} h_j u'_j=\omega-c_k}\prod_{j=1}^{n+l-1} P_{\overrightarrow{S'}_j(k-1)}(u'_j) P(\omega-\sum_{j=1}^{n+l-1} h_j u'_j).
\end{align}
According to the definition of convolution operation, the probability mass function of $b_k$ can be written as
\begin{align}\label{puk}
{\bf P_{b_k}}={\bf P_{\sum_{j=1}^{n+l-1} h_j\overrightarrow{S'}_j(k-1)}} \ast  {\bf P_{c_k}}.
\end{align}

Using similar procedures of deriving (\ref{puk}) from (\ref{ak-1rk2}), we can get
\begin{align}
{\bf P_{b_k}}&={\bf P_{c_k}} \ast {\bf P_{h_1\overrightarrow{S'}_1(k-1)}} \ast \cdots \ast  {\bf P_{h_{n+l-1}\overrightarrow{S'}_{n+l-1}(k-1)}}\nonumber\\
&={\bf P_{c_k}} \ast \Pi_{h_1} {\bf P_{\overrightarrow{S'}_1(k-1)}} \ast \cdots \ast \Pi_{h_{n+l-1}} {\bf P_{\overrightarrow{S'}_{n+l-1}(k-1)}}.
\end{align}

This proves Theorem \ref{theorem 1}.

\section{proof of theorem 3}\label{proof of theorem 3}
We consider the backward decoding of convolutional codes in $GF(q)$. Let $\overleftarrow{S'}_j(k), j=1, 2, \cdots, n+l,$ denote the memory of the $j$-th shift register of the backward encoder of $\bar{C}$, generated by $q(x)$. Let ${\bf P_{\overleftarrow{S'}_j(k)}}=[ p_{\overleftarrow{S'}_j(k)}(0), p_{\overleftarrow{S'}_j(k)}(1), \cdots, p_{\overleftarrow{S'}_j(k)}(q-1)]$ denote the pmf vector of $\overleftarrow{S'}_j(k)$. The probability that $b_k=\omega$ is given by
\begin{align}
P_{b_k}(\omega)&=P\left\{b_k=\omega|\overrightarrow{y}\right\}=\sum_{(u',u)=U(b_k=\omega)}\beta_{k}\left(u\right)\gamma_k\left(u', u\right)\nonumber\\  \label{backward all shift register contents}
&=\sum_{(u',u)=U(b_k=\omega)} \prod_{i=1}^{n+l} P_{\overleftarrow{S'}_i(k)}(u_i) P(c_k)\\ \label{backward no first shift register}
&=\sum_{(u',u)=U(b_k=\omega)} \prod_{i=2}^{n+l} P_{\overleftarrow{S'}_i(k)}(u_i) P(c_k)\\ \label{bkrk}
&=\sum_{u'_1, u'_2, \cdots, u'_{n+l}, \sum_{j=1}^{n+l-1} h_j u'_j+c_k=\omega}\prod_{j=1}^{n+l-1} P_{\overleftarrow{S'}_j(k-1)}(u'_j) P(c_k).
\end{align}

Note that (\ref{backward no first shift register}) is derived from (\ref{backward all shift register contents}) because at time slot $k$, the dual encoder output for BCJR MAP backward decoding is independent of $P_{\overleftarrow{S'}_1(k)}$. Based on (\ref{bkrk}), we can get
\begin{align}
{\bf P_{b_k}}&={\bf P_{c_k}} \ast {\bf P_{h_1\overleftarrow{S'}_1(k-1)}} \ast \cdots \ast  {\bf P_{h_{n+l-1}\overleftarrow{S'}_{n+l-1}(k-1)}}\nonumber\\
&={\bf P_{c_k}} \ast \Pi_{h_1} {\bf P_{\overleftarrow{S'}_1(k-1)}} \ast \cdots \ast \Pi_{h_{n+l-1}} {\bf P_{\overleftarrow{S'}_{n+l-1}(k-1)}}.
\end{align}

This proves Theorem \ref{theorem 3}.

\section{Proof of theorem 4}\label{proof of theorem 4}
We consider a convolutional code in $GF(q)$, generated by $g(x)$. Its dual encoder for decoding is described by $q(x)$. It is assumed that the state of the dual encoder $\bar{C}$ transits from $\left(u'_1, u'_2, \cdots, u'_{n+l}\right)$ at time $k-1$ to $\left(u_1, u_2, \cdots, u_{n+l}\right)$ at time $k$ with input $c_k$. For the bidirectional BCJR MAP algorithm, the probability that $b_k=\omega$ is given by
\begin{align}\label{bidirectional decoder proof}
P_{b_k}(\omega)&=P\left\{b_k=\omega|\overrightarrow{y}\right\}=\sum_{(u', u)=U(b_k=\omega)}\alpha_{k-1}\left(u'\right)\gamma_k\left(u', u\right)\beta_k\left(u\right)\nonumber\\
&=\sum_{(u',u)=U(b_k=\omega)} \prod_{j=1}^{n+l} P_{\overrightarrow{S'}_j(k-1)}(u'_j) P(c_k)\prod_{i=1}^{n+l} P_{\overleftarrow{S'}_j(k)}(u_i)\nonumber\\
&=\sum_{(u',u)=U(b_k=\omega)} \prod_{j=1}^{n+l-1} P_{\overrightarrow{S'}_j(k-1)}(u'_j) P(c_k)\prod_{i=2}^{n+l} P_{\overleftarrow{S'}_j(k)}(u_i)\nonumber\\
&=\sum_{u'_1, u'_2, \cdots, u'_{n+l}, \sum_{j=1}^{n+l-1} h_j u'_j +c_k=\omega}\prod_{j=1}^{n+l-1} P_{\overrightarrow{S'}_j(k-1)}(u'_j) P(c_k)\prod_{j=1}^{n+l-1} P_{\overleftarrow{S'}_j(k-1)}(u'_j)\nonumber\\
&=\sum_{u'_1, u'_2, \cdots, u'_{n+l}, \sum_{j=1}^{n+l-1} h_j u'_j+c_k=\omega}\prod_{j=1}^{n+l-1} P_{S'_j(k-1)}(u'_j) P(c_k),
\end{align}
where $P_{S'_j(k-1)}(u'_j) = P_{\overrightarrow{S'}_j(k-1)}(u'_j)P_{\overleftarrow{S'}_j(k-1)}(u'_j)$. From (\ref{bidirectional decoder proof}), we can get
\begin{align}\label{bidirectional decoder output BCJR}
{\bf P_{b_k}}&={\bf P_{c_k}} \ast {\bf P_{h_1S'_1(k-1)}} \ast \cdots \ast  {\bf P_{h_{n+l-1}S'_{n+l-1}(k-1)}}\nonumber\\
&={\bf P_{c_k}} \ast \Pi_{h_1} {\bf P_{S'_1(k-1)}} \ast \cdots \ast \Pi_{h_{n+l-1}} {\bf P_{S'_{n+l-1}(k-1)}}.
\end{align}

Comparing the shift register combined outputs of the dual encoder in (\ref{bidirectional decoder output}) and the outputs of the bidirectional BCJR MAP algorithm in (\ref{bidirectional decoder output BCJR}), we can see that they are exactly of the same.

This proves Theorem \ref{theorem 4}.

\bibliographystyle{IEEEtran}
\bibliography{IEEEabrv,dualdecoder}

\newpage

\begin{figure}[!h]
  \includegraphics[width=6.3in]{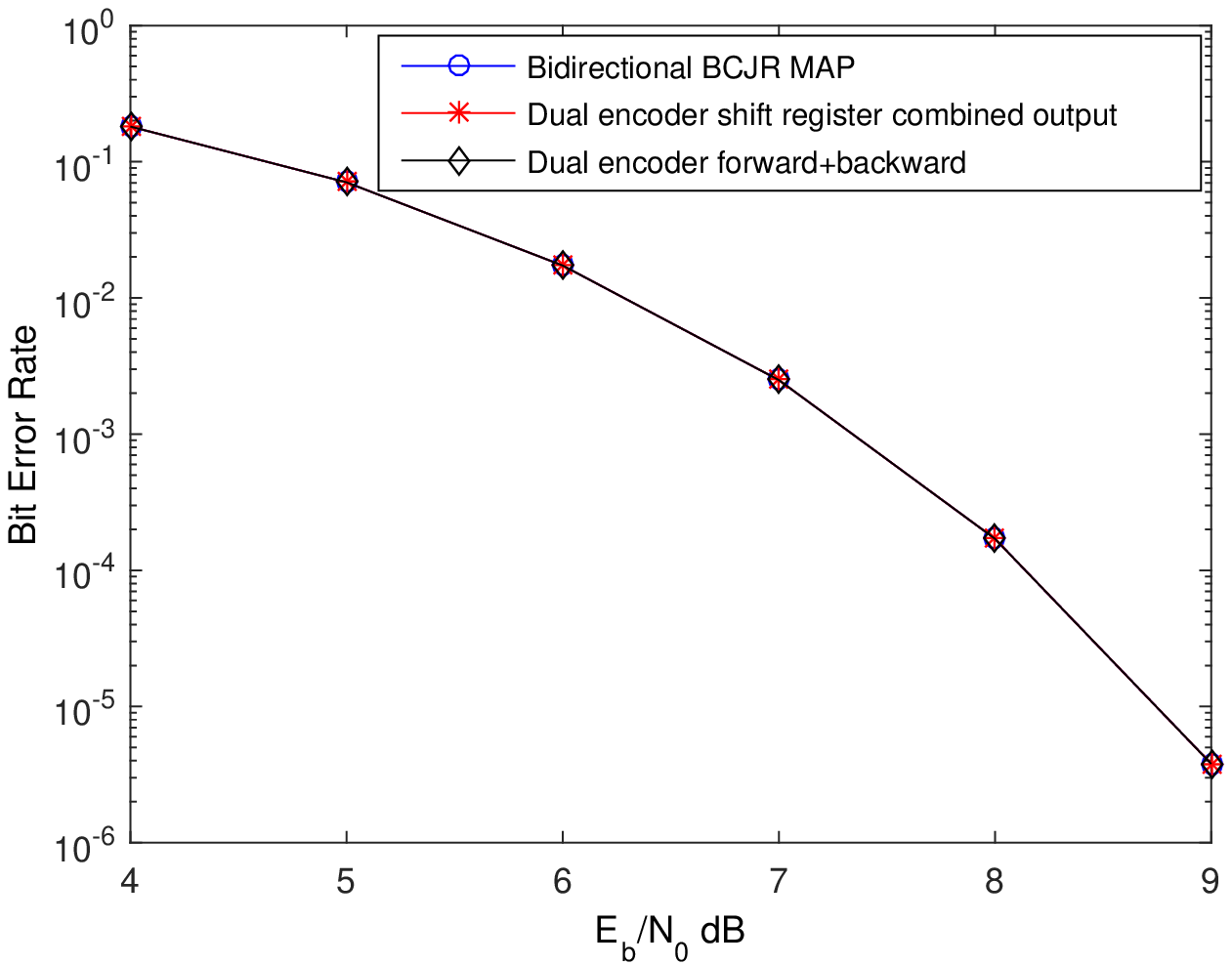}
  \caption{BER performance of $g(x)=1+x$ code over AWGN channels.}\label{BER performance of $G(D)=1+D$ FFC code}
\end{figure}

\begin{figure}[!h]
  \includegraphics[width=6.3in]{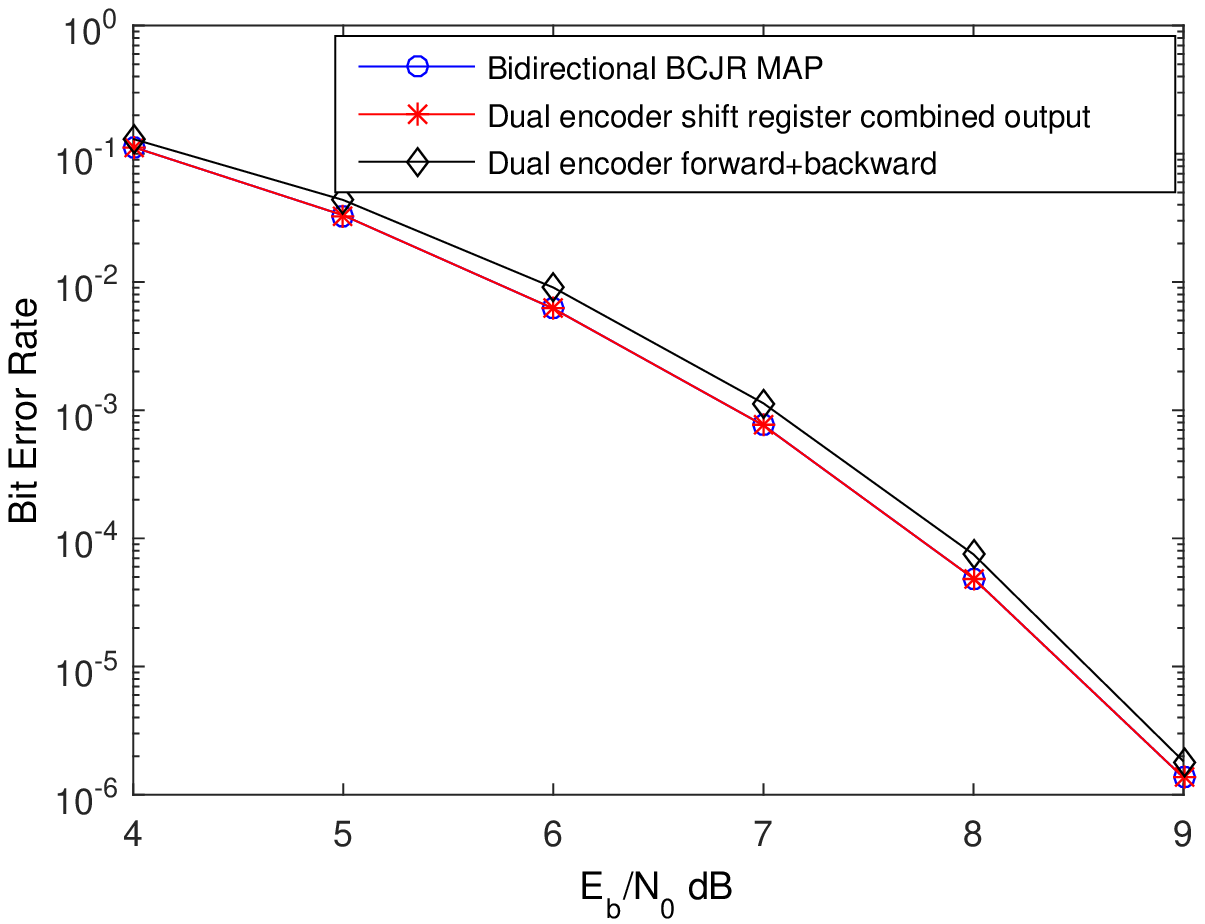}
  \caption{BER performance of $g(x)=1+3x+2x^2$ code over AWGN channels.}\label{BER performance of $G(D)=1+3D+2D^2$ FFC code}
\end{figure}

\begin{figure}[!h]
  \includegraphics[width=6.3in]{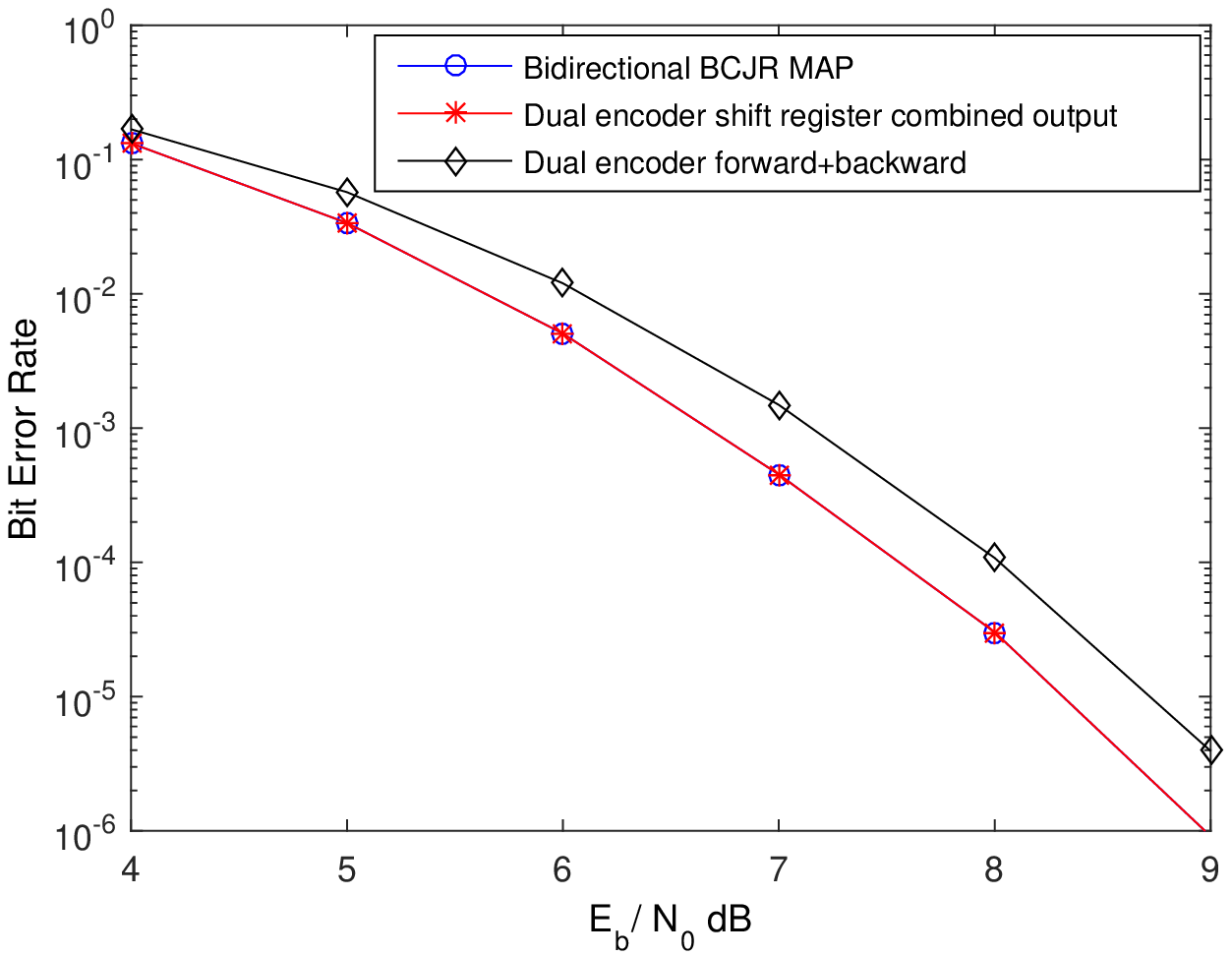}
  \caption{BER performance of $g(x)=1+x+2x^2$ code over AWGN channels.}\label{BER performance of $G(D)=1+D+2D^2$ FFC code}
\end{figure}

\begin{figure}[!h]
  \includegraphics[width=6.3in]{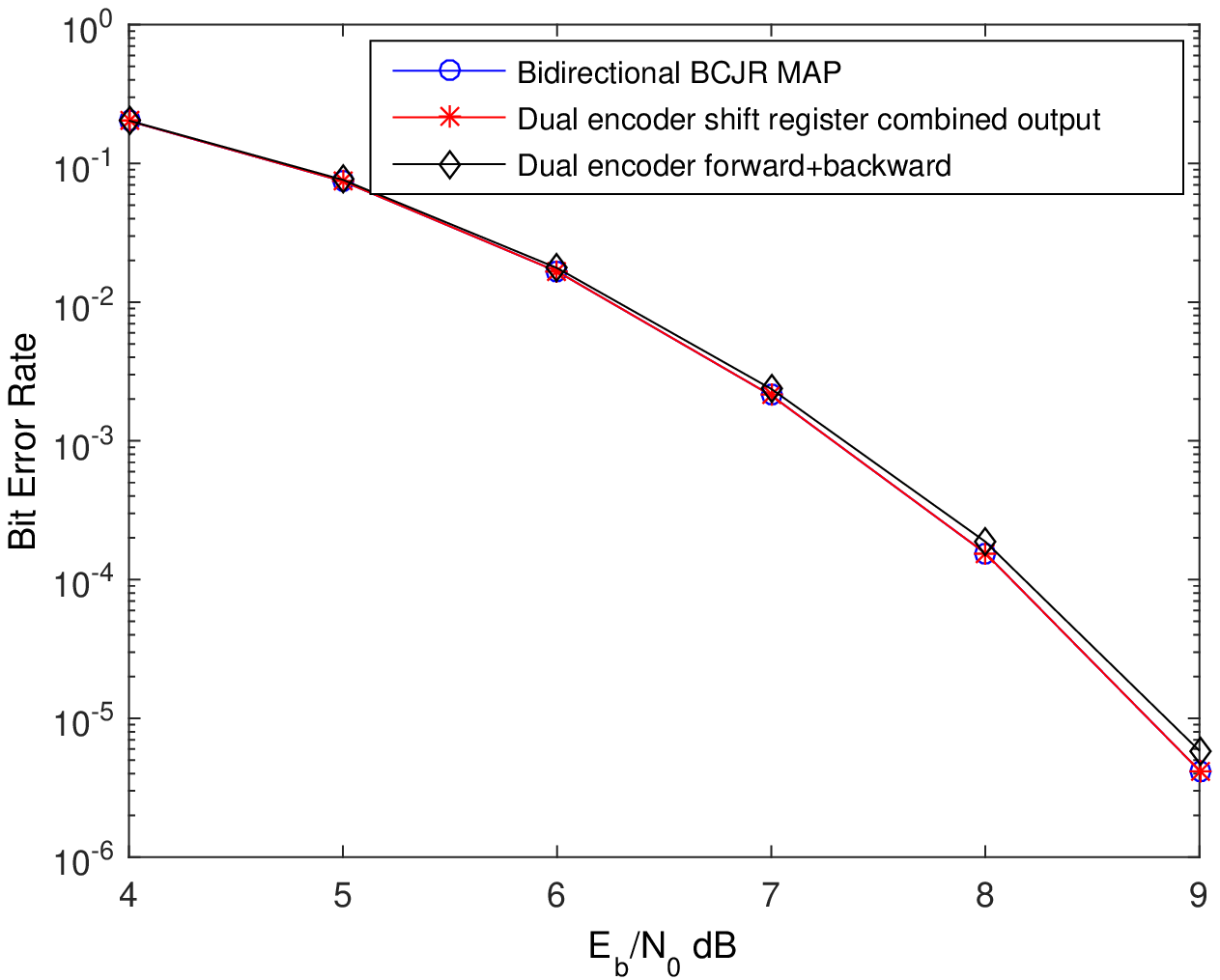}
      \caption{BER performance of $g(x)=\frac{1+x}{1+2x}$ code over AWGN channels.} \label{BER performance of $132/112$ GC code}
\end{figure}

\begin{figure}[!h]
  \includegraphics[width=6.3in]{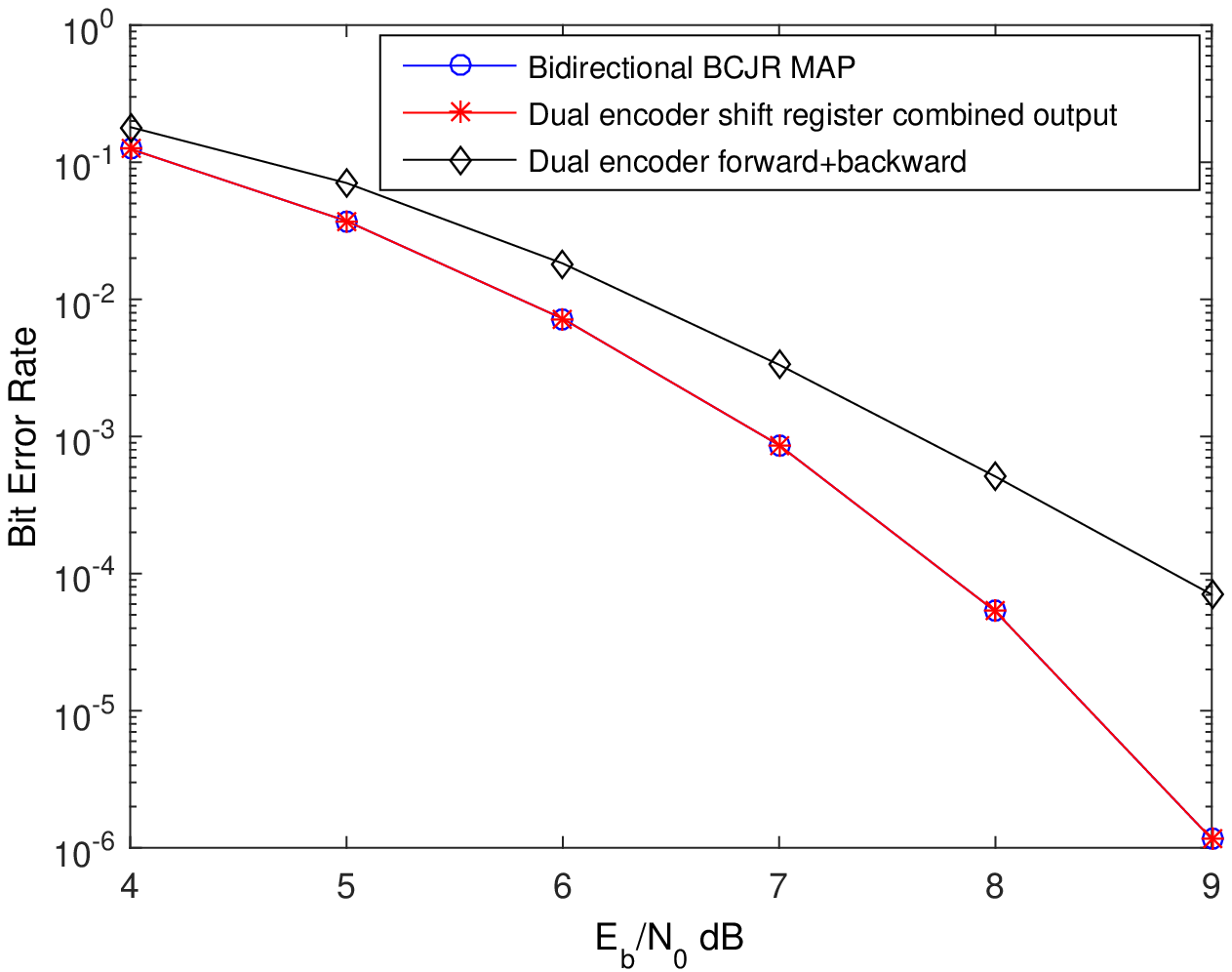}
      \caption{BER performance of $g(x)=\frac{1+3x+2x^2}{1+x+2x^2}$ code over AWGN channels.} \label{BER performance of $132/112$ GC code}
\end{figure}

\end{document}